\numberwithin{equation}{section}
\theoremstyle{definition}
\theoremstyle{plain}
\newtheorem{theorem}{Theorem}[section]
\newtheorem{lemma}{Lemma}[section]
\newtheorem{assumption}{Assumption}[section]
\theoremstyle{remark} 
\newtheorem{remark}{Remark}[section]
\newtheorem{examplenorm}{Example}[section]
\newcommand{\HRule}{\rule{\linewidth}{0.5mm}}
\providecommand{\abs}[1]{\left\lvert#1\right\rvert} 
\newcommand{\EE}{\mathds{E}}  
\newcommand{\eFF}{\mathcal{F}}
\newcommand{\eBB}{\mathcal{B}}
\newcommand{\Ind}{\mathds{1}} 
\newcommand{\NN}{\mathds{N}}
\newcommand{\PP}{\mathds{P}}  
\newcommand{\RR}{\mathds{R}}  
\newcommand{\QQ}{\mathds{Q}}  
\newcommand{\DD}{\mathds{D}}
\newcommand{\Var}{\mathds{V}\kern-2pt\text{ar}} 
\newcommand{\intd}{\mathrm{d}}
\newcommand{\R}{\mathbb{R}}
\newcommand{\vari}{\mathbb{V}{\rm ar}}
\let\oldsqrt\sqrt
\def\sqrt{\mathpalette\APsqrt}
\def\APsqrt#1#2{%
\setbox0=\hbox{$#1\oldsqrt{#2}$}\dimen0=\ht0%
\advance\dimen0-0.2\ht0%
\setbox2=\hbox{\vrule height\ht0 depth -\dimen0}%
{\box0\lower0.48pt\box2}} 
\begin{document}

\begin{titlepage}
\begin{center}

\HRule \\[0.4cm]
{ \LARGE Fast Convergence of Regress-Later Estimates in Least Squares Monte Carlo}\\
\HRule \\[0.4cm]
\end{center}

\begin{center}
	\large Eric Beutner \footnotemark[1] \\
	\textit{Maastricht University}
\end{center}
\begin{center}
	\large Antoon Pelsser \footnotemark[2] \\
	\textit{Maastricht University \& Netspar \& Kleynen Consultants}
\end{center}
\begin{center}
	\large Janina Schweizer  \footnotemark[3]\\
	\textit{Maastricht University \& Netspar}
\end{center}


\footnotetext[1]{Maastricht University, Dept. of Quantitative Economics, P.O.~Box 616, 6200 MD Maastricht, The Netherlands, e.beutner@maastrichtuniversity.nl}
\footnotetext[2]{Maastricht University, Dept. of Quantitative Economics and Dept. of Finance, P.O.~Box 616, 6200 MD Maastricht, The Netherlands, a.pelsser@maastrichtuniversity.nl}
\footnotetext[3]{Maastricht University, Dept. of Quantitative Economics, P.O.~Box 616, 6200 MD Maastricht, The Netherlands, j.schweizer@maastrichtuniversity.nl}

\end{titlepage}

\begin{abstract}
Many problems in financial engineering involve the estimation of unknown conditional expectations across a time interval. 
Often Least Squares Monte Carlo techniques are used for the estimation. One method that can be combined with Least Squares Monte Carlo is  the ``Regress-Later'' method. Unlike conventional methods where the value function is regressed on a set of basis functions valued at the beginning of the interval, the ``Regress-Later'' method regresses the value function on a set of basis functions valued at the end of the interval. The conditional expectation across the interval is then computed exactly for each basis function. We provide sufficient conditions under which we derive the convergence rate of Regress-Later estimators. Importantly, our results hold on non-compact sets. We show that the Regress-Later method is capable of converging significantly faster than conventional methods and provide an explicit example. Achieving faster convergence speed provides a strong motivation for using Regress-Later methods in estimating conditional expectations across time.
\newline \\
\noindent \textit{Key words: Least squares Monte Carlo, Series estimation, Least squares regression}
\end{abstract}

\section{Introduction}\label{sec:Intro}
The Least Squares Monte Carlo (LSMC) technique is widely applied in the area of Finance to estimate conditional expectations across a time interval. Under LSMC the cross-sectional information inherent in the simulated data is exploited to obtain approximating functions to conditional expectations through performing least squares regressions on the simulated data. Examples may be found in \citet{Carriere_LSMC}, \citet{Broadie_Glasserman_American}, \citet{Longstaff_Schwartz}, \citet{Tsitsiklis_American}, \citet{Clement}, \citet{Stentoft_LSMC}, \citet{Glasserman_Yu_2004}, \citet{Egoff}, \citet{Belomestny}, \citet{Gerhold} and \citet{Zanger}, who discuss approaches to LSMC with application to American\slash Bermudan option pricing; see also \citet{Broadie_Glasserman_American} who apply simulation based methods and a dynamic programming algorithm to American option pricing.
These papers have in common that the conditional expectation at time $t$ is approximated through least squares regression of the value function at a time point $T>t$ against basis functions at the earlier time point $t$. This approach to the estimation of conditional expectations has been termed ``regression now'' by \citet{Glasserman_Yu_2002}. Here we will use the expression Regress-Now. In the same paper \citet{Glasserman_Yu_2002} introduce an alternative approach that they called ``regression later'' (throughout this paper Regress-Later). In Regress-Later the value function at a time point $T$ is approximated through LSMC techniques by basis functions that are measurable with respect to the information available at time $T$. Moreover, the basis functions in Regress-Later are selected such that the conditional expectation can be computed exactly. The conditional expectation of the time $T$ value function is then derived by evaluating the basis functions contained in the approximation function.
In this paper, we shall show that the Regress-Later method is fundamentally different from the Regress-Now technique. But before we briefly review recent contributions to the literature.\\
\citet{Glasserman_Yu_2002} show that the Regress-Later approach offers advantages compared to Regress-Now. They compare the properties of the coefficient estimates given that both approximations yield a linear combination of the same basis functions. Their results suggest that in a single-period problem the Regress-Later algorithm yields a higher coefficient of determination 
and a lower covariance matrix for the estimated coefficients; see also \citet{Broadie_Cao} who report similar observations. This implies that with Regress-Later potentially a better fit is attained with more accurate coefficient estimates. The results depend on more restrictive conditions on the basis functions as these are required to fulfill the martingale property. However, for many financial applications it is reasonable to expect that such a basis exists.  \citet{Bender_Steiner} use LSMC to numerically approximate the conditional expectations involved in estimating backward stochastic differential equations. They consider the Regress-Later algorithm and combine it with martingale basis functions as suggested in \citet{Glasserman_Yu_2002}. Their empirical case studies suggest that Regress-Later with martingale basis functions achieves a better numerical approximation at lower computational costs compared to traditional LSMC. The empirical results all show faster convergence rates for the Regress-Later algorithm combined with martingale basis functions compared to traditional LSMC.

Here, we shall shed more light on the advantages offered by Regress-Later as observed in \citet{Glasserman_Yu_2002}, \citet{Broadie_Cao} and \citet{Bender_Steiner} by analyzing the properties of Regress-Later in terms of its convergence rate. As it seems to be the first attempt to derive convergence rates for Regress-Later estimators we restrict ourselves to single-period problems. Our analysis will reveal that, as mentioned above, Regress-Later is fundamentally different from Regress-Now. Firstly, because Regress-Later can and does achieve a convergence rate for the mean-square error that is faster than $N^{-1}$; cf.~Section \ref{sec:convergence_regresslater}. Here and throughout $N$ is the sample size. We shall present an example where the convergence is indeed much faster than $N^{-1}$; cf.~Section \ref{sec:piecewise linear}. This is in sharp contrast to Regress-Now that can never converge faster than $N^{-1}$. We provide explanations for both facts, i.e.~the bound $N^{-1}$ for Regress-Now and the faster convergence rate for Regress-Later. It will turn out that the latter is a consequence of the fact that Regress-Later is a non-standard regression problem, because the variance of the noise term converges to zero. Secondly, we shall explain that the conditions needed to derive convergence rates for Regress-Later estimators are much weaker than the typical assumptions used in the literature for Regress-Now estimators; an exception is the recent work by \citet{Zanger}. This has to do with the fact that for Regress-Now estimators reasonable conditions stemming from nonparametric statistics were employed in the literature whereas for Regress-Later estimators we should definitely use parametric assumptions. Thereby, we will easily obtain approximations of the value function on non-compact intervals; see the discussion in Section \ref{sec:convergence_regresslater}.
Apart from these fundamental differences we will also present several examples which show that the functions to be approximated in Regress-Now may differ in nature compared to Regress-Later. Furthermore, we explain that the nonparametric assumptions that were applied in deriving convergence rates for Regress-Now estimators (see, for instance, \citet{Stentoft_LSMC}) can be weakened. These relaxed assumptions allow us to approximate the value function on the entire real line by a Regress-Now estimator and not only on a compact domain.

The structure of this paper is as follows. Section \ref{sec:Hilbert_Sieve} introduces the general LSMC estimator with sieve and distinguishes between its Regress-Now and Regress-Later applications. In Section \ref{sec:convergence_regresslater} the asymptotic convergence rate for Regress-Later estimators is derived under conditions that allow to approximate the value function on non-compact intervals. Moreover, similar conditions are applied to Regress-Now estimators while a motivation is given for when these conditions may be applicable for the Regress-Now technique. We conclude this section by providing explanations for the different convergence rates of Regress-Now and Regress-Later estimators. Section \ref{sec:piecewise linear} introduces an orthonormal basis based on piecewise linear functions and derives the explicit convergence rate for Regress-Later with that basis. Section \ref{sec:Conclusion} concludes. The proofs of all auxiliary results are presented in the appendix.


\section{Mathematical Model for Regress-Now and Regress-Later}\label{sec:Hilbert_Sieve}
As described in the introduction Regress-Now and Regress-Later are simulation based techniques to estimate conditional expectations.
 Often they are combined with series or sieve estimation, where the number of regressors in the least squares estimation is not fixed and finite; for an overview on series and sieve estimation one may refer to \citet{Chen_Sieve}.
In this section, we describe the mathematical model that is used throughout and explain the Regress-Now and Regress-Later approaches within this model.




We start with our mathematical model. Let $Z=\{Z(t), 0
\leq t \leq T\}$ be a $d$-dimensional stochastic process with $d \in
\mathds{N}$ defined on some filtered probability space $(\Omega,\eFF,\lbrace \eFF_t \rbrace_{0 \leq t \leq T}, \tilde{\PP})$. We denote the filtration generated by $Z$ by $\{\mathcal{F}_t\}_{0 \leq t \leq T}$. The measure $\tilde{\PP}$ denotes some probability measure equivalent to the true probability measure $\PP$. We leave $\tilde{\PP}$ generally unspecified when developing our model, but  provide the reader with an interpretation of the mathematical model for selecting $\tilde{\PP}$ just before Subsection \ref{sec:regress-now}.
The paths $Z(\cdot,\omega)$ of $Z$ given by $t \rightarrow Z(t,\omega)$, $t \in [0,T]$, are assumed to lie in some function space $\DD_d[0,T]$
consisting of functions mapping from $[0,T]$ to $\RR^d$, and we consider $Z$ as a random function. If $d=1$ we just write $\DD[0,T]$ and $\RR$.
We assume that the payoff $X$ is $\mathcal{F}_T$-measurable and that for every $\omega$ in
the sample space $\Omega$ the payoff $X(\omega)$ of the contingent claim $X$
can be written as $g_T(A_T(Z(\cdot,\omega)))$, where $A_T$ is a
known (measurable) functional mapping from $\DD_d[0,T]$ to $\RR^{\ell}$ and $g_T$ is
a known Borel-measurable function that maps from $\RR^{\ell}$ to $\RR$. This basically means that the payoff function $X$ depends only on finitely many characteristics of the stochastic paths of the underlying process. These characteristics are comprised in the functional mapping $A_T$.
The notation is very powerful for our purposes later on, and we illustrate it here with an example.

\begin{examplenorm}\label{example-asian-opt}{\it (Asian option)} Let $Z$ be one-dimensional and $X=(\int_0^T Z_1(u)\,\intd u-K)^+$, where $K$ is the strike price. Then $X$ does only depend on $\int_0^T Z_1(u)\,\intd u$. Thus, $A_T(f)=\int_0^T f(u)\,\intd u$ for every function $f \in \DD[0,T]$ and therefore $\ell=1$.
\end{examplenorm}

\noindent Observe that at time $T$ it suffices to observe the time average of the stochastic process rather than the whole path. Further examples that highlight the idea behind the notation are given in Sections \ref{sec:regress-now} and \ref{sec:regress-later}.

In the relevant literature, it has
become standard to restrict attention to square-integrable random
variables; \citep[see e.g.][]{Stentoft_LSMC, Bergstrom_Hilbert,MadandMil_1994, Longstaff_Schwartz}. We do the same here, that is we assume $g_T \in L_2\big(\RR^{\ell}, \eBB(\RR^{\ell}), \tilde{\PP}^{A_T(Z)}\big)$ (implying that $X$ is square-integrable, because $X(\omega)=g_T(A_T(Z(\cdot,\omega)))$)
where $\eBB(\RR^{\ell})$ denotes the Borel $\sigma$-algebra on $\RR^{\ell}$, and $\tilde{\PP}^{A_T(Z)}$ denotes the probability measure on $\RR^{\ell}$ induced
by the mapping $A_T(Z)$. Recall that $L_2\big(\RR^{\ell}, \eBB(\RR^{\ell}), \tilde{\PP}^{A_T(Z)}\big)$ is a Hilbert space with
inner product
\begin{eqnarray*}
\int_{\RR^{\ell}} h_1(u)h_2(u)\, \intd \tilde{\PP}^{A_T(Z)}(u)=\EE_{\tilde{\PP}}[h_1(A_T(Z))h_2(A_T(Z))]
\end{eqnarray*}
and norm
\begin{eqnarray*}
\sqrt{\int_{\RR^{\ell}} h_1(u)h_1(u)\, \intd \tilde{\PP}^{A_T(Z)}(u)}=\sqrt{\EE_{\tilde{\PP}}[h_1^2(A_T(Z))]}.
\end{eqnarray*}

As already mentioned the quantity of interest is $\EE_{\tilde{\PP}}[X|\mathcal{F}_t]$ where $\tilde{\PP}$ denotes a probability measure.
If we take $\tilde{\PP}=\QQ$ where $\QQ$ is the equivalent risk-neutral probability measure, then $\EE_{\QQ}[D(t,T)X|\mathcal{F}_t]$, where $D(t,T)$ is the discount factor for the period $t$ to $T$, corresponds to the time $t$ price of $X$. As a further example for the importance of $\EE_{\tilde{\PP}}[X|\mathcal{F}_t]$ take $\tilde{\PP}=\PP$, where $\PP$ is the true probability measure. Then, $\EE_{\tilde{\PP}}[X|\mathcal{F}_t]$ is the best $L_2-$approximation to $X$ that is measurable w.r.t.~the $\sigma$-field $\mathcal{F}_t$.
In Sections \ref{sec:convergence_regresslater} and \ref{sec:piecewise linear} we will use $\tilde{\PP}$ and leave it unspecified. Regress-Now with sieves and Regress-Later with sieves are two different simulation-based approaches to obtain an approximation to the time $t$ value of $X$. We outline both approaches in the following subsections.


\subsection{Regress-Now} \label{sec:regress-now}
We first describe the Regress-Now approach which is currently more popular.
To describe the Regress-Now approach, we assume that the quantity of interest, $\EE_{\tilde{\PP}}[X|\mathcal{F}_t]$, can be written as
\begin{equation*}\label{eq:regress_now_target}
g_{0,t}\big(A_t(Z)\big) = \EE_{\tilde{\PP}}\left[X|\mathcal{F}_t\right], \; 0\leq t < T,
\end{equation*}
where $A_t$ is a known (measurable) functional mapping from $\DD_d[0,t]$ to $\RR^{s}$ and $g_{0,t}$ is
an unknown Borel-measurable function that maps from $\RR^{s}$ to $\RR$. Here, $\DD_d[0,t]$ is the restriction of $\DD_d[0,T]$ to the interval $[0,t]$.

\begin{remark}
The notation $g_{0,t}(A_t(Z))$ is used to emphasize that the function $g_{0,t}$ is generally unknown. Thus, we use the convention that a subscript `0' indicates the true but unknown parameter. In contrast, note that $g_T(A_T(Z))$ refers to the payoff function, which is known in a simulation-based model as the simulation is controlled by the modeler.
\end{remark}

We give a few examples below for $g_{0,t}$ and $A_t$ that serve to illustrate the notation and concept. In these examples we take $\tilde{\PP}=\QQ$ to emphasise the pricing aspect of conditional expectations and for convenience we assume that the discount factor is equal to 1. 

\begin{examplenorm}\label{example-european-opt-now}{(\it European call with Regress-Now)} Let $Z$ be one-dimensional and consider an European call. Then $X=(Z_1(T)-K)^+$, where $K$ is the strike price. Moreover, $\EE_{\QQ}\left[X|\mathcal{F}_t\right]$ does only depend on $Z_1(t)$. Hence, we can take $A_t(f)=f(t)$ for every function $f \in \DD[0,t]$ and therefore $s=1$.
\end{examplenorm}

\begin{examplenorm}\label{example-european-dim-d-opt-now}{(\it European basket option with Regress-Now)} Consider a $d$-dimensional European basket option of the type $X=\big(\sum_{i=1}^d Z_i(T)-K\big)^+$, where $K$ is the strike price. In general $\EE_{\QQ}\left[X|\mathcal{F}_t \right]$ depends on $\mathbf{Z}(t)=(Z_1(t),\ldots,Z_d(t))$ and not only on $\sum_{i=1}^d Z_i(t)$. Then $A_t(f)=f(t)$ for every function $f \in \DD_d[0,t]$ and therefore $s=d$. We give an example that shows our claim. Consider two assets $Z_1(t)$ and $Z_2(t)$, $t=0,1,2$, that are independent under $\QQ$ with
\begin{eqnarray*}
& \QQ(Z_1(0)=10)=1; \hspace{0.3cm} \QQ(Z_1(1)=12)=\QQ(Z_1(1)=6)=0.5;\\
&  \hspace{0.2cm} \QQ(Z_1(2)=14|Z_1(1)=12)=\QQ(Z_1(2)=8|Z_1(1)=12)=0.5,\hspace{0.2cm} \QQ(Z_1(2)=6|Z_1(1)=6)=1,
\end{eqnarray*}
and
\begin{eqnarray*}
& \QQ(Z_2(0)=10)=1; \hspace{0.3cm} \QQ(Z_2(1)=12)=\QQ(Z_2(1)=6)=0.5;\\
 & \QQ(Z_2(2)=14|Z_2(1)=12)=\QQ(Z_2(2)=8|Z_2(1)=12)=0.5;\\
 & \QQ(Z_2(2)=9|Z_2(1)=6)=\QQ(Z_2(2)=1|Z_2(1)=6)=0.5.
\end{eqnarray*}



\noindent Take $X= \left( Z_1(2)+Z_2(2)- K \right)^{+}$ with $K=10$. We are interested in the conditional expectation at time $t=1$, i.e. $\EE_{\QQ}[X | \mathcal{F}_1]$. For the case where $Z_1(1)+Z_2(1)=18$ we obtain the following results
\begin{align*}
& \EE_{\QQ}[X |Z_1(1)= 12, Z_2(1)= 6 ] = 6.25 \mbox{ and } \EE_{\QQ}[X |Z_1(1)= 6, Z_2(1)=12 ] = 7.
\end{align*}
We immediately see that knowing the sum $Z_1(1)+Z_2(1)$ does not suffice to determine the conditional expectation at time $t=1$ as for $Z_1(1)+Z_2(1) = 18$ the conditional expectation can either be $6.25$ or $7$.
\end{examplenorm}

\begin{examplenorm}\label{example-asian-opt-now}{(\it Asian option with Regress-Now)} Let $Z$ be one-dimensional and $X=\big(\int_0^T Z_1(u)\,\intd u-K \big)^+$, where $K$ is again the strike price. Then $\EE_{\QQ}\left[X|\mathcal{F}_t \right]$ does only depend on $\int_0^t Z_1(u)\,\intd u$ and $Z_1(t)$. Thus, $A_t(f)= \left( \int_0^t f(u)\,\intd u,f(t) \right)$ for every function $f \in \DD[0,t]$ and therefore $s=2$.
\end{examplenorm}

\begin{examplenorm}\label{example-mild path-dependent-now}{(\it Mildly path-dependent option with Regress-Now)} Let $Z$ be one-dimensional, let $X$ be a function of $Z(u)$, $u<T$, and $Z(T)$, i.e.~$X=g_T(Z(u),Z(T))$ for some function $g_T$ and suppose that the expectation $\EE_{\QQ}\left[X|\mathcal{F}_t \right]$ depends only on $Z(t)$ for $t<u$. Then $A_t(f)=f(t)$ for every function $f \in \DD[0,t]$ and therefore $s=1$. 
\end{examplenorm}

\noindent The above examples illustrate the notation used for the Regress-Now model. 
We contrast the Regress-Now examples to their Regress-Later counterparts in Subsection \ref{sec:regress-later}.

In the following we describe how the Regress-Now {\it with sieves} approach estimates $g_{0,t}$. The description is rather detailed, because we will use it in Section \ref{sec:convergence_regresslater} to explain the different convergence rates for Regress-Now and Regress-Later.



Recall that the square-integrability of $X$ implies that $\EE_{\tilde{\PP}}[X|\mathcal{F}_t]$ is square-integrable as well. Hence, we also have that $g_{0,t} \in L_2\big(\RR^s, \eBB(\RR^s), \tilde{\PP}^{A_t(Z)}\big)$. Since the space $L_2\big(\RR^s, \eBB(\RR^s), \tilde{\PP}^{A_t(Z)}\big)$ is separable, $g_{0,t}$ is expressible in terms of a countable orthonormal basis $\left\lbrace e_k^{\text{\textit{now}}} \right\rbrace_{k=1}^{\infty}$
\begin{equation*}
    g_{0,t} = \sum_{k=1}^{\infty} \alpha_k^{\text{\textit{now}}} e_k^{\text{\textit{now}}};
\end{equation*}
see, for instance, \citet[Corollary 4.2.2~and Corollary 4.3.4]{Bogachev}. Because $g_{0,t}(A_t(Z))$ is the projection of $X$, the coefficients are given as
\begin{equation*}\label{equation coefficient g_{0,t}}
\alpha_k^{\text{\textit{now}}} = \EE_{\tilde{\PP}}[Xe_k^{\text{\textit{now}}}(A_t(Z))].
\end{equation*}
Thus, in particular, we have
\begin{equation}\label{eq:basis-rep}
    g_{0,t}\left(A_t(Z)\right) = \sum_{k=1}^{\infty} \alpha_k^{\text{\textit{now}}} e_k^{\text{\textit{now}}}\left(A_t(Z)\right);
\end{equation}
and, as usual, we define the projection error $p_{0,t}(A_T(Z))$ by
\begin{equation*}\label{equation projection error}
p_{0,t}(A_T(Z)):=X-g_{0,t}(A_t(Z))
\end{equation*}
which implies the well-known representation
\begin{equation*}\label{equation regress}
X=g_{0,t}(A_t(Z))+p_{0,t}(A_T(Z)).
\end{equation*}
Notice also that by construction $g_{0,t}(A_t(Z))$ and $p_{0,t}(A_T(Z))$ are orthogonal, i.e.
\begin{equation*}\label{equation ortho regress and projection}
\EE_{\tilde{\PP}}\left[g_{0,t}(A_t(Z))p_{0,t}(A_T(Z))\right]=0.
\end{equation*}
The Regress-Now approach tries to estimate the unknown function $g_{0,t}$ through its representation in Equation \eqref{eq:basis-rep} by generating data under $\tilde{\PP}$. However, Equation \eqref{eq:basis-rep} involves infinitely many parameters, which leaves the estimation infeasible. Sieve estimation offers solutions by estimating the model through finite-dimensional representations, which grow in complexity as the sample size increases and thereby yield the true outcome in the limit. For Equation \eqref{eq:basis-rep} this implies that with sieves we approximate $g_{0,t}$ by
\begin{equation*}
    g_{0,t}^K := \sum_{k=1}^K \alpha_k^{\text{\textit{now}}} e_k^{\text{\textit{now}}}= \left(\boldsymbol{\alpha}^{\text{\textit{now}}}_K\right)^T \bm{e}_K^{\text{\textit{now}}},
\end{equation*}
where $\boldsymbol{\alpha}^{\text{\textit{now}}}_K=\left( \alpha_1,\ldots,\alpha_K \right)^T$, $\bm{e}_K^{\text{\textit{now}}}=\left( e_1^{\text{\textit{now}}},\ldots,e_K^{\text{\textit{now}}} \right)^T$, and $T$ denotes transpose. Thus, a superscript ${}^T$ means transpose and it should be easy to distinguish it from the terminal time $T$. This results in an approximation error $a_{0,t}^K$ for $g_{0,t}$ given by
\begin{equation*}\label{def of approximation error}
a_{0,t}^K:=g_{0,t}-g_{0,t}^K.
\end{equation*}
Notice that we have $\EE_{\tilde{\PP}}\left[ g_{0,t}^K(A_t(Z))a_{0,t}^K(A_t(Z)) \right]=0$ by construction. By definition the approximation error $a_{0,t}^K$ converges to zero as $K \to \infty$ in $L_2$. Moreover, we can now write $X$ as
\begin{equation*}\label{eq:regresseq-now}
X=g_{0,t}^K(A_t(Z))+a_{0,t}^K(A_t(Z))+p_{0,t}(A_T(Z)).
\end{equation*}
From the last equation we can clearly see that the difference between $X$ and $g_{0,t}^K(A_t(Z))$ results from two sources: an approximation error and a projection error.\\

Now, given a (simulated) sample of size $N$ denoted by $\big((x_1,A_t(z_1)),\ldots,(x_N,A_t(z_N))\big)$ it is natural to estimate $g_{0,t}^K$ by the `sample projection'
\begin{equation*}
    \hat{g}_{0,t}^K = \arg \min_{g \in \mathcal{H}_K} \frac{1}{N} \sum_{n=1}^N \left(x_n - g(A_t(z_n)) \right)^2,
\end{equation*}
where $\mathcal{H}_K:=\left\{g: \RR^s \rightarrow \RR \;| \; g=\sum_{k=1}^K \alpha_k e_k^{\text{\textit{now}}}, \alpha_k \in \RR \right\}$.
Thus, we have
\begin{equation*}\label{eq:now_estimator}
   \hat{g}_{0,t}^K = \left(\bm{\hat{\alpha}}_K^{\text{\textit{now}}}\right)^T\mathbf{e}_K^{\text{\textit{now}}},
\end{equation*}
with
\begin{align*}
  {\hat{\boldsymbol{\alpha}}}_K^{\text{\textit{now}}} = \left(\left(\bm{E}_K^{{\text{\textit{now}}}}\right)^T\bm{E}_K^{{\text{\textit{now}}}}\right)^{-1} \left(\bm{E}_K^{\text{\textit{now}}}\right)^T\bm{X}, \end{align*}
where $\bm{X} = \left(x_{1}, ..., x_{N}\right)^T$ and $\bm{E}_K^{\text{\textit{now}}}$ is an $N \times K$ matrix with the $n^{\text{th}}$ row equal to $\bm{e}_K^{\text{\textit{now}}}(A_t(z_{n}))$, $n=1,\ldots,N$.
Notice that ${\hat{\boldsymbol{\alpha}}}_K^{\text{\textit{now}}}$ corresponds to the usual least squares estimator from a regression of $X$ against $K$ basis functions valued at time $t$.

\subsection{Regress-Later} \label{sec:regress-later}
In the previous section we discussed the Regress-Now approach.
Regress-Later proceeds as follows to approximate the quantity of interest, i.e.~$\EE_{\tilde{\PP}}\left[X|\mathcal{F}_t\right],  0 \leq t < T$. Approximate first the payoff $X$ by basis functions, mathematically speaking random variables, for which the calculation of the conditional expectation at time $t$ is exact. Then, given the linear representation of $X$ through basis functions, apply the operator $\EE_{\tilde{\PP}}\left[\cdot|\mathcal{F}_t\right]$ to these basis functions. The approach takes advantage of the linearity of the expectation operator. Note that the two-step approach is advantageous if for the payoff function $X$ basis functions exist that can easily be evaluated under the conditional expectation. For the case where $\tilde{\PP} = \QQ$ this implies that closed-form solutions for the price of the basis function must be readily available. We introduce a very simple but effective basis function in Section \ref{sec:piecewise linear}.
We now describe the Regress-Later approach and address differences to the Regress-Now approach.

Recall that we assume $X=g_T(A_T(Z))$ with $A_T$ a
known (measurable) functional mapping from $\DD_d[0,T]$ to $\RR^{\ell}$ and $g_T$ a known Borel-measurable function that maps from $\RR^{\ell}$ to $\RR$.
The examples below shall illustrate the meaning of $g_T$ and $A_T$. They may also be compared to their counterparts in Section \ref{sec:regress-now}.

\begin{examplenorm}\label{example-european-opt-lat}{\it (European call with Regress-Later)} Let $Z$ be one-dimensional and consider an European call. Then $X=(Z_1(T)-K)^+$, where $K$ is the strike price. Then, $X$ does only depend on $Z_1(T)$. Therefore, we can take $A_T(f)=f(T)$ for every function $f \in \DD[0,T]$ and hence $\ell=1$. Moreover, $g_T$ is given by $g_T(x)=(x-K)^+$.
\end{examplenorm}

\begin{examplenorm}\label{example-european-dim-d-opt-lat}{\it (European basket option with Regress-Later)} Consider the $d$-dimensional European basket option of Example \ref{example-european-dim-d-opt-now}. Then we can take $A_T(f)=\sum_{i=1}^d f_i(T)$ for every function $f \in \DD[0,T]$ and therefore $\ell=1$. Compare with Example \ref{example-european-dim-d-opt-now} where we had $s=d$. Moreover, $g_T$ is again given by $g_T(x)=(x-K)^+$ with $K$ the strike price.
\end{examplenorm}

\begin{examplenorm}\label{example-asian-opt-lat}{\it (Asian option with Regress-Later)} This corresponds to Example \ref{example-asian-opt}. For the readers' convenience and to contrast it with Example \ref{example-asian-opt-now} we repeat it here. Let $Z$ be one-dimensional and $X=\big(\int_0^T Z_1(u)\,\intd u-K \big)^+$, where $K$ is the strike price. Then $X$ does only depend on $\int_0^T Z_1(u)\,\intd u$. Thus, $A_T(f)=\int_0^T f(u)\,\intd u$ for every function $f \in \DD[0,T]$ and therefore $\ell=1$. Compare with Example \ref{example-asian-opt-now} where we had $s=2$. Again, we have $g_T(x) = (x-K)^{+}$.
\end{examplenorm}

\begin{examplenorm}\label{example-mild path-dependent-lat}{\it (Mildly path-dependent option with Regress-Later)} Let $X$ be as in Example \ref{example-mild path-dependent-now}. Then $A_T(f)=(f(u),f(T))$ for every function $f \in \DD[0,T]$ and therefore $\ell=2$. Recall that we had $s=1$ in Example \ref{example-mild path-dependent-now}. 
\end{examplenorm}

\noindent The above examples illustrate the notation that is applied to the Regress-Later approach. They also show that there may be fundamental differences between Regress-Now and Regress-Later. As already mentioned in the introduction the functions to be approximated in Regress-Now may differ in nature compared to Regress-Later. Recall that in Regress-Now the unknown function $g_{0,t}(A_t(Z))$ is approximated while in Regress-Later the known function $g_T(A_T(Z))$ is initially of interest. Although the ultimate goal in both Regress-Now and Regress-Later is to approximate the conditional expectation function, the approximation approaches are very different. As Examples \ref{example-european-dim-d-opt-lat}, \ref{example-asian-opt-lat}, \ref{example-mild path-dependent-lat} and their Regress-Now counterparts show the dimensionality of the function to be approximated under the same problem set-up may differ between Regress-Now and Regress-Later. The dimensionality of the function to be approximated may be one decision criterion in choosing between Regress-Now and Regress-Later. In Section \ref{sec:convergence_regresslater} we investigate the speed of convergence of both estimators and provide a strong argument for using Regress-Later estimators.

In the following we describe the Regress-Later estimation {\it with sieves}.
Remember that we assume square-integrability of the payoff function, meaning that $g_T \in L_2\big(\RR^{\ell}, \eBB(\RR^{\ell}), \tilde{\PP}^{A_T(Z)}\big)$. Hence, by the same argument as in Section \ref{sec:regress-now},
\begin{equation*}\label{eqaution X as infintie sum in l^2}
    X=g_{T}(A_T(Z)) = \sum_{k=1}^{\infty} \alpha_k^{\text{\textit{lat}}} e_k^{\text{\textit{lat}}}(A_T(Z)),
\end{equation*}
where $\left\lbrace e_k^{\text{\textit{lat}}} \right\rbrace_{k=1}^{\infty}$ is a countable orthonormal basis of $L_2\big(\RR^{\ell}, \eBB(\RR^{\ell}), \tilde{\PP}^{A_T(Z)}\big)$. Notice that by construction the projection error is zero in contrast to the Regress-Now approach. The coefficients $\alpha_k^{\text{\textit{lat}}}$ are given by
\begin{equation*}\label{alpha lat}
\alpha_k^{\text{\textit{lat}}}=\EE_{\tilde{\PP}}\left[X e_k^{\text{\textit{lat}}}(A_T(Z))\right].
\end{equation*}
As in Regress-Now approaches we apply sieves and approximate
$$g_T=\sum_{k=1}^{\infty} \alpha_k^{\text{\textit{lat}}} e_k^{\text{\textit{lat}}}$$
by a finite number of regressors, i.e.
$$g_T^K=\sum_{k=1}^{K} \alpha_k^{\text{\textit{lat}}} e_k^{\text{\textit{lat}}}=\left(\boldsymbol{\alpha}_K^{\text{\textit{lat}}}\right)^T\bm{e}_K^{\text{\textit{lat}}},$$
where $\boldsymbol{\alpha}_K^{\text{\textit{lat}}}=(\alpha_1^{\text{\textit{lat}}},\ldots,\alpha_K^{\text{\textit{lat}}})^T$ and $\bm{e}_K^{\text{\textit{lat}}}=(e_1^{\text{\textit{lat}}},\ldots,e_K^{\text{\textit{lat}}})^T$.
Defining the approximation error $a_T^K$ as usual by $a_T^K:=g_T-g_T^K$ we obtain the representation
\begin{equation}\label{eq:regresseq-lat}
X=g_T^K(A_T(Z))+a_T^K(A_T(Z)),
\end{equation}
which, as already mentioned, does not contain a projection error. Notice also that\\ $\EE_{\tilde{\PP}}\left[g_T^K(A_T(Z))a_T^K(A_T(Z))\right]=0$. It should be emphasized again that the approximation error converges to zero as $K \to \infty$ in $L_2$. As for Regress-Now with sieves given a (simulated) sample of size $N$ denoted by $(x_1,A_T(z_1)),\ldots,(x_N,A_T(z_N))$ it is natural to estimate $g_{T}^K$ by the `sample projection'
leading to
\begin{equation*}\label{eq:lat_estimator}
   \hat{g}_{T}^K = \left(\bm{\hat{\alpha}}_K^{\text{\textit{lat}}}\right)^T \mathbf{e}_K^{\text{\textit{lat}}},
\end{equation*}
with
\begin{align*}\label{eq estimator coefficients}
  {\hat{\boldsymbol{\alpha}}}_K^{\text{\textit{lat}}} = \left(\left(\bm{E}_K^{{\text{\textit{lat}}}}\right)^T\bm{E}_K^{{\text{\textit{lat}}}}\right)^{-1} \left(\bm{E}_K^{{\text{\textit{lat}}}}\right)^T\bm{X},
\end{align*}
where $\bm{X} = \left(x_{1}, ..., x_{N}\right)^T$ and $\bm{E}_K^{{\text{\textit{lat}}}}$ is an $N \times K$ matrix with the $n^{\text{th}}$ row equal to $\mathbf{e}_K^{\text{\textit{lat}}}(A_T(z_n))$, $n=1,\ldots,N$.
Notice that ${\hat{\boldsymbol{\alpha}}}_K^{\text{\textit{lat}}}$ corresponds to the usual least squares estimator from a regression of $X$ against $K$ basis functions valued at time $T$.

\section{Convergence Rates for Regress-Later with sieves} \label{sec:convergence_regresslater}
In this section we derive convergence rates for Regress-Later with
sieves and comment on convergence rates for Regress-Now with sieves. We start with the analysis of Regress-Later estimators. Our method of proof follows
\citet{Newey_SeriesEstimators}. Its presentation follows in part
\citet{Hansen}. However, the conditions we impose are different from the
conditions in \citet{Newey_SeriesEstimators} which have, for
instance, also been applied by \citet{Stentoft_LSMC}. To understand this note that \citet{Newey_SeriesEstimators} takes a nonparametric approach to estimating a conditional expectation that is unknown. We exemplify this with Assumption 3 in \citet{Newey_SeriesEstimators}. This assumption with $d=0$ (not to be confused with the $d$ we use here for the dimension of $Z$) would read as follows for Regress-Later\\

{\it \noindent There are $\gamma > 0$, $\bm{\alpha}_K^{\textit{lat}}$ s.t.
\begin{equation}\label{eq Newey sup approx}
    \sup_{x \in D} \abs{g_T(x) - g_T^K(x)} = \sup_{x \in D} \abs{g_T(x) - \left(\boldsymbol{\alpha}_K^{\text{\textit{lat}}}\right)^T\bm{e}_K^{\text{\textit{lat}}}(x)}= O(K^{-\gamma_{\textit{lat}}})
\end{equation}
as $K \rightarrow \infty$, where $D$ is the domain of $g_T$.}\\ 

Note that Condition \eqref{eq Newey sup approx} is independent of the probability measure $\tilde{\PP}$. From a nonparametric point of view this makes perfectly sense,
because, if it is fulfilled, the convergence rate is the same whatever the true probability measure. However, in the context of LSMC {\it we do know}\ $\tilde{\PP}$, because it is
the measure used in the simulation and it is controlled by the user. Thus, it is legitimate to relax Assumption 3 in \citet{Newey_SeriesEstimators}. Additionally, Condition \eqref{eq Newey sup approx} implicitly requires that $g_T$ is bounded or that $D$ is compact, unless $g_T$ is, for
instance, in the span of the $\bm{e}_K^{\text{\textit{lat}}}$. In the context of American option pricing \citet{Stentoft_LSMC} circumvents the problem by explicitly neglecting far in-the-money and far out-of-the-money tails of the distribution. Although this is a reasonable assumption in the context of American options obtaining results on the whole domain is surely welcomed in other areas of application. As we know $\tilde{\PP}$ in a simulation-based framework, we will replace Assumption 3 in \citet{Newey_SeriesEstimators} by the following condition\\

\begin{assumption} \label{as:Newey_3 modified}
There are $\gamma_{\textit{lat}} > 0$, $\bm{\alpha}_K^{\textit{lat}}$ s.t.
\begin{eqnarray*}
  \sqrt{ \EE_{\tilde{\PP}}\left[\left(g_T(A_T(Z)) - (\bm{\alpha}_K^{\text{\textit{lat}}})^T \mathbf{e}_K^{\text{\textit{lat}}}(A_T(Z))\right)^4\right]} & = & \sqrt{\int_{\RR^{\ell}} \left(g_T(u) - (\bm{\alpha}_K^{\text{\textit{lat}}})^T \mathbf{e}_K^{\text{\textit{lat}}}(u)\right)^4 \, \intd \tilde{\PP}^{A_T(Z)}(u)} \nonumber \\
  & = & \sqrt{\int_{\RR^{\ell}} a_T^K(u)^4 \, \intd \tilde{\PP}^{A_T(Z)}(u)} =O\big(K^{- \gamma_{\text{\textit{lat}}}}\big).
\end{eqnarray*}
\end{assumption}

Notice that Assumption \ref{as:Newey_3 modified} does not require that $g_T$ is bounded or that its domain is compact.
From a nonparametric point of view  Assumption \ref{as:Newey_3 modified} is unsatisfactory, because the $O\big(K^{-\gamma_{\text{\textit{lat}}}}\big)$ term on the right-hand side is not independent of $\tilde{\PP}$. However, with regard to LSMC it is worth stressing once again that we know the data generating process, i.e.~$\tilde{\PP}$, $A_T$ and $g_T$, and that therefore Assumption \ref{as:Newey_3 modified} can, for instance, be checked by considering the behavior of $g_T$ in the tails. Moreover, it is also worth pointing out that
$$
\sqrt{\int_{\R^{\ell}} \left(g_T(u) - g_T^K(u)\right)^4 \, \intd \tilde{\PP}^{A_T(Z)}(u)}=O(K^{- \gamma_{\text{\textit{lat}}}})
$$
implies
$$
\sqrt{\int_{\R^{\ell}} \left(g_T(u) - g_T^K(u)\right)^4 \, \intd \widehat{\PP}^{A_T(Z)}(u)}=O(K^{- \gamma_{\text{\textit{lat}}}})
$$
whenever $\widehat{\PP}^{A_T(Z)}$ has a bounded density w.r.t.~$\tilde{\PP}^{A_T(Z)}$.\\
We now state our second assumption to derive the convergence rate for Regress-Later with sieves.

\begin{assumption} \label{as:Newey_1}
$\left((X_{1}, A_T(Z_1)),\ldots,(X_{N}, A_T(Z_{N}))\right)$ are i.i.d.
\end{assumption}

To formulate our theorem on the convergence rate of Regress-Later with sieves we define the net $\tilde{h}^{\text{\textit{lat}}}: \NN \times \NN \rightarrow \RR$ by
$$ \tilde{h}^{\text{\textit{lat}}}(N,K):= \frac{1}{N} \; \EE_{\tilde{\PP}}\left[\left(\left(\bm{e}_K^{\text{\textit{lat}}}(A_T(Z))\right)^T \bm{e}_K^{\text{\textit{lat}}}(A_T(Z))\right)^2\right].$$
We can now state our main theorem for the convergence of Regress-Later estimators.

\begin{theorem} \label{th:Newey_1}
    Let Assumptions \ref{as:Newey_3 modified} and \ref{as:Newey_1} be satisfied. Additionally, assume that there is a sequence $\mathcal{K}: \NN \rightarrow \NN$ such that
\begin{equation}\label{th:Newey eq rate}
\tilde{h}^{\text{\textit{lat}}}(N,\mathcal{K}(N)) \rightarrow 0 \mbox{ as } N \rightarrow \infty.
\end{equation}
Then
    \begin{equation*} \label{eq:Newey_1}
        \EE_{\tilde{\PP}}\left[ \left(X - \hat{g}_T^{\mathcal{K}(N)}(A_T(Z))\right)^2 \right] =
        O_{\tilde{\PP}}\left(\mathcal{K}(N)^{-\gamma_{\text{\textit{lat}}}}\right).
    \end{equation*}
\end{theorem}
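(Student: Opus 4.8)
The plan is to exploit the feature, special to Regress-Later, that the representation \eqref{eq:regresseq-lat} contains no projection error, so that the only sources of error are the deterministic approximation error $a_T^K$ and the coefficient estimation error $\hat{\boldsymbol{\alpha}}_K^{\text{\textit{lat}}}-\boldsymbol{\alpha}_K^{\text{\textit{lat}}}$; I will show the latter is of \emph{smaller} order than the former along $K=\mathcal{K}(N)$, which is precisely why the overall rate is governed by $\mathcal{K}(N)^{-\gamma_{\text{\textit{lat}}}}$. Throughout write $K=\mathcal{K}(N)$, $\bm{e}_K=\bm{e}_K^{\text{\textit{lat}}}$, $\bm{E}_K=\bm{E}_K^{\text{\textit{lat}}}$, $\boldsymbol{\alpha}=\boldsymbol{\alpha}_K^{\text{\textit{lat}}}$, $\hat{\boldsymbol{\alpha}}=\hat{\boldsymbol{\alpha}}_K^{\text{\textit{lat}}}$, and interpret the expectation in the statement as an integral over a fresh copy of $(X,Z)$ independent of the estimation sample, so that $\hat{\boldsymbol{\alpha}}$ is treated as fixed and the claimed bound is a statement in $\tilde{\PP}$-probability over the sample. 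From \eqref{eq:regresseq-lat} and the definition of $\hat{g}_T^K$,
\[
X - \hat{g}_T^K(A_T(Z)) = \bigl(\boldsymbol{\alpha} - \hat{\boldsymbol{\alpha}}\bigr)^{T}\bm{e}_K(A_T(Z)) + a_T^K(A_T(Z)),
\]
where the first summand lies in the span of $e_1^{\text{\textit{lat}}},\dots,e_K^{\text{\textit{lat}}}$ and the second is orthogonal to it in $L_2(\RR^{\ell},\eBB(\RR^{\ell}),\tilde{\PP}^{A_T(Z)})$. Using also $\EE_{\tilde{\PP}}[\bm{e}_K(A_T(Z))\bm{e}_K(A_T(Z))^{T}]=I_K$ (orthonormality), this yields the exact decomposition
\[
\EE_{\tilde{\PP}}\Bigl[\bigl(X - \hat{g}_T^K(A_T(Z))\bigr)^2\Bigr] = \norm{\boldsymbol{\alpha} - \hat{\boldsymbol{\alpha}}}^2 + \EE_{\tilde{\PP}}\bigl[a_T^K(A_T(Z))^2\bigr].
\]
For the second term, Jensen's inequality and Assumption \ref{as:Newey_3 modified} give $\EE_{\tilde{\PP}}[a_T^K(A_T(Z))^2]\le\sqrt{\EE_{\tilde{\PP}}[a_T^K(A_T(Z))^4]}=O(K^{-\gamma_{\text{\textit{lat}}}})$, so it remains to prove $\norm{\boldsymbol{\alpha}-\hat{\boldsymbol{\alpha}}}^2 = o_{\tilde{\PP}}(K^{-\gamma_{\text{\textit{lat}}}})$.

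Next I would linearise the least squares estimator. Writing $\bm{X}=\bm{E}_K\boldsymbol{\alpha}+\bm{a}$ with $\bm{a}=(a_T^K(A_T(z_1)),\dots,a_T^K(A_T(z_N)))^{T}$ and $\hat{Q}:=\tfrac{1}{N}\bm{E}_K^{T}\bm{E}_K$, the normal equations give $\hat{\boldsymbol{\alpha}}-\boldsymbol{\alpha}=\hat{Q}^{-1}\tfrac{1}{N}\bm{E}_K^{T}\bm{a}$ whenever $\hat{Q}$ is invertible, hence $\norm{\hat{\boldsymbol{\alpha}}-\boldsymbol{\alpha}}^2\le\norm{\hat{Q}^{-1}}_{\mathrm{op}}^2\,\norm{\tfrac{1}{N}\bm{E}_K^{T}\bm{a}}^2$. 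I would then control the two factors separately. For the design matrix: by Assumption \ref{as:Newey_1} the rows of $\bm{E}_K$ are i.i.d.\ with $\EE_{\tilde{\PP}}[\bm{e}_K(A_T(Z))\bm{e}_K(A_T(Z))^{T}]=I_K$, and a direct second-moment computation gives $\EE_{\tilde{\PP}}[\norm{\hat{Q}-I_K}_{\mathrm{F}}^2]\le\tfrac{1}{N}\EE_{\tilde{\PP}}[\norm{\bm{e}_K(A_T(Z))}^4]=\tilde{h}^{\text{\textit{lat}}}(N,K)$, which tends to $0$ by \eqref{th:Newey eq rate}. By Markov's inequality $\tilde{\PP}(\norm{\hat{Q}-I_K}_{\mathrm{op}}>\tfrac12)\le\tilde{\PP}(\norm{\hat{Q}-I_K}_{\mathrm{F}}>\tfrac12)\to 0$, so on an event $E_N$ with $\tilde{\PP}(E_N)\to 1$ the matrix $\hat{Q}$ is invertible with $\norm{\hat{Q}^{-1}}_{\mathrm{op}}\le 2$.

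For the other factor, the key point is that the $j$-th coordinate of $\tfrac{1}{N}\bm{E}_K^{T}\bm{a}$ is a sample mean of the i.i.d.\ variables $e_j^{\text{\textit{lat}}}(A_T(z_n))\,a_T^K(A_T(z_n))$, which have mean zero precisely because $a_T^K$ is orthogonal to each $e_j^{\text{\textit{lat}}}$; therefore
\[
\EE_{\tilde{\PP}}\Bigl[\norm{\tfrac{1}{N}\bm{E}_K^{T}\bm{a}}^2\Bigr]\le\tfrac{1}{N}\,\EE_{\tilde{\PP}}\bigl[\norm{\bm{e}_K(A_T(Z))}^2\,a_T^K(A_T(Z))^2\bigr],
\]
and a Cauchy--Schwarz step together with $\EE_{\tilde{\PP}}[\norm{\bm{e}_K(A_T(Z))}^4]=N\tilde{h}^{\text{\textit{lat}}}(N,K)$ and Assumption \ref{as:Newey_3 modified} bounds the right-hand side by $\tfrac{1}{N}\sqrt{N\tilde{h}^{\text{\textit{lat}}}(N,K)}\cdot O(K^{-\gamma_{\text{\textit{lat}}}})=O\bigl(\sqrt{\tilde{h}^{\text{\textit{lat}}}(N,K)}\,K^{-\gamma_{\text{\textit{lat}}}}\bigr)=o(K^{-\gamma_{\text{\textit{lat}}}})$, again by \eqref{th:Newey eq rate}. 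Combining this with the bound on $E_N$ and using $\tilde{\PP}(E_N^{c})\to0$, a final Markov argument gives $\norm{\hat{\boldsymbol{\alpha}}-\boldsymbol{\alpha}}^2=o_{\tilde{\PP}}(K^{-\gamma_{\text{\textit{lat}}}})$; substituting this and the approximation-error bound into the decomposition of the first paragraph yields $\EE_{\tilde{\PP}}[(X-\hat{g}_T^{\mathcal{K}(N)}(A_T(Z)))^2]=o_{\tilde{\PP}}(\mathcal{K}(N)^{-\gamma_{\text{\textit{lat}}}})+O(\mathcal{K}(N)^{-\gamma_{\text{\textit{lat}}}})=O_{\tilde{\PP}}(\mathcal{K}(N)^{-\gamma_{\text{\textit{lat}}}})$, as claimed.

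The step I expect to be the main obstacle is the spectral control of $\hat{Q}$ on a non-compact domain. Unlike the classical series-estimation arguments of \citet{Newey_SeriesEstimators}, no uniform bound on $\sup_x\norm{\bm{e}_K(x)}$ is available here, so the entire near-singularity argument for $\hat{Q}^{-1}$ — and, via the Cauchy--Schwarz step above, the bound on $\tfrac1N\bm{E}_K^{T}\bm{a}$ — has to rest solely on the averaged fourth-moment functional $\tilde{h}^{\text{\textit{lat}}}(N,K)$ appearing in \eqref{th:Newey eq rate}, and extracting the correct powers of $N$ and $K$ from that single quantity is the delicate accounting. A secondary subtlety is that the ``noise'' $\bm{a}$ is a deterministic function of the regressors rather than an exogenous error term, so one cannot appeal to independence and must instead run the argument through the $L_2$-orthogonality $\EE_{\tilde{\PP}}[\bm{e}_K(A_T(Z))\,a_T^K(A_T(Z))]=\bm{0}$, handling the dependence between $\hat{Q}^{-1}$ and $\tfrac1N\bm{E}_K^{T}\bm{a}$ by restricting to the high-probability event $E_N$.
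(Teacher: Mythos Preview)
Your proof is correct and follows essentially the same route as the paper. The paper packages the two main steps into Lemmas~\ref{lemma conv matrix and smallest eigenvalue} and~\ref{lemma conv quadratic form parameter}: the first is your Frobenius-norm/Markov argument giving $\lambda_{\min}(\hat{Q})\to 1$ in probability (equivalently your bound $\|\hat{Q}^{-1}\|_{\mathrm{op}}\le 2$ on $E_N$), and the second is your Cauchy--Schwarz bound on $\tfrac{1}{N}\EE_{\tilde{\PP}}[\|\bm{e}_K\|^2\,a_T^K{}^2]$ leading to $\|\hat{\boldsymbol{\alpha}}-\boldsymbol{\alpha}\|^2=o_{\tilde{\PP}}(K^{-\gamma_{\text{\textit{lat}}}})$; the orthogonal decomposition and the Jensen/Cauchy--Schwarz bound on the approximation term are identical to the paper's.
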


Please notice that the convergence rate in Theorem \ref{th:Newey_1} is completely determined by the speed of the approximation and Condition (\ref{th:Newey eq rate}), i.e.~the growth rate of $\mathcal{K}(N)$. Below, we will see that this is in sharp contrast to Regress-Now. Moreover, this fact makes it possible that we may obtain a convergence rate that is faster than $N^{-1}$. We will comment on this at the end of this section.\\
Before giving the proof of Theorem \ref{th:Newey_1} two further remarks are in order. First, notice that Condition (\ref{th:Newey eq rate}) restricts the growth rate of $\mathcal{K}$ when compared to $N$, because for fixed $N$ the net $\tilde{h}^{\text{\textit{lat}}}(N,K)$ is increasing in $K$. Second, if Condition (\ref{th:Newey eq rate}) holds, then we also have

$$\widehat{h}^{\text{\textit{lat}}}(N,\mathcal{K}(N)):= \frac{1}{N} \; \EE_{\widehat{\PP}}\left[\left( \left( \bm{e}_{\mathcal{K}(N)}^{\text{\textit{lat}}}(A_T(Z)) \right)^T \bm{e}_{\mathcal{K}(N)}^{\text{\textit{lat}}}(A_T(Z))\right)^2\right] \rightarrow 0 \mbox{ as } N \rightarrow \infty$$
whenever $\widehat{\PP}^{A_T(Z)}$ has a bounded density w.r.t.~$\tilde{\PP}^{A_T(Z)}$.  \\

The proof of Theorem \ref{th:Newey_1} is based on the following two lemmas whose proofs are given in the appendix.

\begin{lemma}\label{lemma conv matrix and smallest eigenvalue} If Condition (\ref{th:Newey eq rate}) and Assumption \ref{as:Newey_1} hold, we have
\begin{equation}\label{eq matrix conv}
\Big|\Big| \frac{1}{N}  \left(\bm{E}_{\mathcal{K}(N)}^{{\text{\textit{lat}}}}\right)^T\bm{E}_{\mathcal{K}(N)}^{{\text{\textit{lat}}}}-I_{\mathcal{K}(N)}\Big|\Big|_F=o_{\tilde{\PP}}(1),
\end{equation}
where $||\cdot||_F$ is the Frobenius norm and $I_{\mathcal{K}(N)}$ denotes the $\mathcal{K}(N) \times \mathcal{K}(N)$ identity matrix. Moreover,
\begin{equation}\label{eq eigenvalue conv}
\lambda_{\min}\left(\frac{1}{N} \left(\bm{E}_{\mathcal{K}(N)}^{{\text{\textit{lat}}}}\right)^T\bm{E}_{\mathcal{K}(N)}^{{\text{\textit{lat}}}}\right) \stackrel{\tilde{\PP}}{\rightarrow} 1,
\end{equation}
where $\lambda_{\min}(\bm{A})$ denotes the smallest eigenvalue of a matrix $\bm{A}$.
\end{lemma}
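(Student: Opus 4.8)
The plan is to prove \eqref{eq matrix conv} by a direct second-moment computation combined with Markov's inequality, and then to deduce \eqref{eq eigenvalue conv} from \eqref{eq matrix conv} via a standard eigenvalue perturbation bound. Throughout I would write $K=\mathcal{K}(N)$, abbreviate $e_k:=e_k^{\text{\textit{lat}}}$ and $\bm{e}_K:=\bm{e}_K^{\text{\textit{lat}}}$, and set $\widehat{Q}_K:=\frac{1}{N}\bigl(\bm{E}_K^{\text{\textit{lat}}}\bigr)^T\bm{E}_K^{\text{\textit{lat}}}$, whose $(i,j)$ entry is the sample average $\frac{1}{N}\sum_{n=1}^N e_i(A_T(z_n))\,e_j(A_T(z_n))$. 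Since $\{e_k\}$ is orthonormal in $L_2\bigl(\RR^\ell,\eBB(\RR^\ell),\tilde{\PP}^{A_T(Z)}\bigr)$, one has $\EE_{\tilde{\PP}}\bigl[e_i(A_T(Z))e_j(A_T(Z))\bigr]=\delta_{ij}$, so $\EE_{\tilde{\PP}}[\widehat{Q}_K]=I_K$ entrywise and $\widehat{Q}_K-I_K$ is a centered random matrix.

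First I would compute, using Assumption \ref{as:Newey_1} so that a sample mean of i.i.d.\ terms has variance $N^{-1}$ times the single-term variance,
$$
\EE_{\tilde{\PP}}\bigl[\,\|\widehat{Q}_K-I_K\|_F^2\,\bigr]=\sum_{i,j=1}^K\Var_{\tilde{\PP}}\!\Bigl(\frac{1}{N}\sum_{n=1}^N e_i(A_T(z_n))e_j(A_T(z_n))\Bigr)\le\frac{1}{N}\sum_{i,j=1}^K\EE_{\tilde{\PP}}\bigl[e_i(A_T(Z))^2e_j(A_T(Z))^2\bigr].
$$
The key observation is that this double sum factorizes into a perfect square:
$$
\sum_{i,j=1}^K\EE_{\tilde{\PP}}\bigl[e_i(A_T(Z))^2e_j(A_T(Z))^2\bigr]=\EE_{\tilde{\PP}}\Bigl[\Bigl(\sum_{i=1}^K e_i(A_T(Z))^2\Bigr)^{\!2}\Bigr]=\EE_{\tilde{\PP}}\Bigl[\bigl(\bm{e}_K(A_T(Z))^T\bm{e}_K(A_T(Z))\bigr)^{2}\Bigr],
$$
so the upper bound is exactly $\tilde{h}^{\text{\textit{lat}}}(N,K)$. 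With $K=\mathcal{K}(N)$, Condition \eqref{th:Newey eq rate} forces $\tilde{h}^{\text{\textit{lat}}}(N,\mathcal{K}(N))\to 0$, and Markov's inequality then gives $\|\widehat{Q}_{\mathcal{K}(N)}-I_{\mathcal{K}(N)}\|_F=o_{\tilde{\PP}}(1)$, i.e.\ \eqref{eq matrix conv}.

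For \eqref{eq eigenvalue conv} I would invoke Weyl's inequality: since $\widehat{Q}_K$ and $I_K$ are symmetric, the smallest eigenvalue is $1$-Lipschitz in the spectral norm, so $|\lambda_{\min}(\widehat{Q}_K)-1|=|\lambda_{\min}(\widehat{Q}_K)-\lambda_{\min}(I_K)|\le\|\widehat{Q}_K-I_K\|_2\le\|\widehat{Q}_K-I_K\|_F$, and for $K=\mathcal{K}(N)$ the last expression is $o_{\tilde{\PP}}(1)$ by the first part, whence $\lambda_{\min}(\widehat{Q}_{\mathcal{K}(N)})\stackrel{\tilde{\PP}}{\rightarrow}1$. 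I do not anticipate a genuine obstacle; the only point needing care is the finiteness of the moments used in the variance step, but $\EE_{\tilde{\PP}}[e_i(A_T(Z))^4]\le\EE_{\tilde{\PP}}\bigl[\bigl(\sum_k e_k(A_T(Z))^2\bigr)^2\bigr]=N\,\tilde{h}^{\text{\textit{lat}}}(N,K)<\infty$ by the very definition of the net $\tilde{h}^{\text{\textit{lat}}}$, and $\EE_{\tilde{\PP}}[e_i^2e_j^2]\le(\EE_{\tilde{\PP}}[e_i^4]\,\EE_{\tilde{\PP}}[e_j^4])^{1/2}$ by Cauchy--Schwarz, so all quantities are finite and the whole argument reduces to the one-line estimate above once one spots the factorization.
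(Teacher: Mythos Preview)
Your proof is correct and follows essentially the same approach as the paper: the paper likewise bounds $\EE_{\tilde{\PP}}\bigl[\|\widehat{Q}_K-I_K\|_F^2\bigr]$ by $\tilde{h}^{\text{\textit{lat}}}(N,\mathcal{K}(N))$ via the same factorization and applies Markov's inequality, and for the eigenvalue part uses the identity $\lambda_{\min}(\widehat{Q}_K)-1=\lambda_{\min}(\widehat{Q}_K-I_K)$ together with the bound $|\lambda_{\min}(\cdot)|\le\|\cdot\|_F$, which is equivalent to your Weyl/spectral-norm argument.
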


\begin{lemma}\label{lemma conv quadratic form parameter} If Condition (\ref{th:Newey eq rate}) and Assumptions \ref{as:Newey_3 modified} and  \ref{as:Newey_1} hold, we have
\begin{equation*}
\left( {\hat{\boldsymbol{\alpha}}}_{\mathcal{K}(N)}^{\text{\textit{lat}}}- \boldsymbol{\alpha}_{\mathcal{K}(N)}^{\text{\textit{lat}}} \right)^T\left( {\hat{\boldsymbol{\alpha}}}_{\mathcal{K}(N)}^{\text{\textit{lat}}}- \boldsymbol{\alpha}_{\mathcal{K}(N)}^{\text{\textit{lat}}} \right)=o_{\tilde{\PP}}\big({\mathcal{K}(N)}^{- \gamma_{\text{\textit{lat}}}}\big).
\end{equation*}
\end{lemma}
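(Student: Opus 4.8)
The plan is to write the coefficient error explicitly in terms of the approximation errors evaluated at the sampled paths, bound its squared Euclidean length by (the inverse square of) the smallest eigenvalue of the normalised design matrix times a quadratic form in those approximation errors, control the eigenvalue factor with Lemma~\ref{lemma conv matrix and smallest eigenvalue}, and bound the quadratic form in mean using the net $\tilde{h}^{\text{\textit{lat}}}$ and Assumption~\ref{as:Newey_3 modified}. Throughout put $K=\mathcal{K}(N)$.

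First I would record the representation of the error. Reading \eqref{eq:regresseq-lat} at each sampled path gives $\bm{X}=\bm{E}_K^{\text{\textit{lat}}}\boldsymbol{\alpha}_K^{\text{\textit{lat}}}+\bm{a}_K$ with $\bm{a}_K:=\big(a_T^K(A_T(z_1)),\ldots,a_T^K(A_T(z_N))\big)^T$. On the event $\{\lambda_{\min}(\tfrac1N(\bm{E}_K^{\text{\textit{lat}}})^T\bm{E}_K^{\text{\textit{lat}}})>0\}$, which by \eqref{eq eigenvalue conv} has $\tilde{\PP}$-probability tending to one, the least squares formula gives $\hat{\boldsymbol{\alpha}}_K^{\text{\textit{lat}}}-\boldsymbol{\alpha}_K^{\text{\textit{lat}}}=\big((\bm{E}_K^{\text{\textit{lat}}})^T\bm{E}_K^{\text{\textit{lat}}}\big)^{-1}(\bm{E}_K^{\text{\textit{lat}}})^T\bm{a}_K$, and bounding $\big((\bm{E}_K^{\text{\textit{lat}}})^T\bm{E}_K^{\text{\textit{lat}}}\big)^{-2}\preceq\lambda_{\min}\big((\bm{E}_K^{\text{\textit{lat}}})^T\bm{E}_K^{\text{\textit{lat}}}\big)^{-2}I_K$ yields
\[
\norm{\hat{\boldsymbol{\alpha}}_K^{\text{\textit{lat}}}-\boldsymbol{\alpha}_K^{\text{\textit{lat}}}}^2\le\lambda_{\min}\Big(\tfrac1N(\bm{E}_K^{\text{\textit{lat}}})^T\bm{E}_K^{\text{\textit{lat}}}\Big)^{-2}\cdot\frac{1}{N^2}\norm{(\bm{E}_K^{\text{\textit{lat}}})^T\bm{a}_K}^2 .
\]
By \eqref{eq eigenvalue conv} the eigenvalue factor converges to $1$ in $\tilde{\PP}$-probability, hence is $O_{\tilde{\PP}}(1)$, so it remains to prove $N^{-2}\norm{(\bm{E}_K^{\text{\textit{lat}}})^T\bm{a}_K}^2=o_{\tilde{\PP}}(K^{-\gamma_{\text{\textit{lat}}}})$.

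Next I would bound this noise term in mean. Setting $\bm{\xi}_n:=\bm{e}_K^{\text{\textit{lat}}}(A_T(z_n))\,a_T^K(A_T(z_n))\in\RR^K$, we have $(\bm{E}_K^{\text{\textit{lat}}})^T\bm{a}_K=\sum_{n=1}^N\bm{\xi}_n$, a sum of i.i.d.\ vectors by Assumption~\ref{as:Newey_1}. Because $a_T^K=\sum_{j>K}\alpha_j^{\text{\textit{lat}}}e_j^{\text{\textit{lat}}}$ in $L_2\big(\RR^\ell,\eBB(\RR^\ell),\tilde{\PP}^{A_T(Z)}\big)$, the approximation error is orthogonal there to $e_1^{\text{\textit{lat}}},\ldots,e_K^{\text{\textit{lat}}}$, so $\EE_{\tilde{\PP}}[\bm{\xi}_1]=\bm{0}$ and hence $\EE_{\tilde{\PP}}\big[\norm{\sum_{n=1}^N\bm{\xi}_n}^2\big]=N\,\EE_{\tilde{\PP}}\big[\big(\bm{e}_K^{\text{\textit{lat}}}(A_T(Z))^T\bm{e}_K^{\text{\textit{lat}}}(A_T(Z))\big)a_T^K(A_T(Z))^2\big]$. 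Cauchy--Schwarz, the definition of $\tilde{h}^{\text{\textit{lat}}}$ and Assumption~\ref{as:Newey_3 modified} give
\[
\EE_{\tilde{\PP}}\Big[\norm{\textstyle\sum_{n=1}^N\bm{\xi}_n}^2\Big]\le N\sqrt{\EE_{\tilde{\PP}}\big[\big(\bm{e}_K^{\text{\textit{lat}}}(A_T(Z))^T\bm{e}_K^{\text{\textit{lat}}}(A_T(Z))\big)^2\big]}\,\sqrt{\EE_{\tilde{\PP}}\big[a_T^K(A_T(Z))^4\big]}=N\sqrt{N\,\tilde{h}^{\text{\textit{lat}}}(N,K)}\,O(K^{-\gamma_{\text{\textit{lat}}}}),
\]
hence $\EE_{\tilde{\PP}}\big[N^{-2}\norm{(\bm{E}_K^{\text{\textit{lat}}})^T\bm{a}_K}^2\big]\le N^{-1/2}\sqrt{\tilde{h}^{\text{\textit{lat}}}(N,K)}\,O(K^{-\gamma_{\text{\textit{lat}}}})$. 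Specialising to $K=\mathcal{K}(N)$ and invoking Condition~\eqref{th:Newey eq rate} ($\tilde{h}^{\text{\textit{lat}}}(N,\mathcal{K}(N))\to0$), the prefactor $N^{-1/2}\sqrt{\tilde{h}^{\text{\textit{lat}}}(N,\mathcal{K}(N))}$ tends to $0$, so $\EE_{\tilde{\PP}}\big[N^{-2}\norm{(\bm{E}_{\mathcal{K}(N)}^{\text{\textit{lat}}})^T\bm{a}_{\mathcal{K}(N)}}^2\big]\big/\mathcal{K}(N)^{-\gamma_{\text{\textit{lat}}}}\to0$; the random variable being nonnegative, Markov's inequality upgrades this to $o_{\tilde{\PP}}(\mathcal{K}(N)^{-\gamma_{\text{\textit{lat}}}})$. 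Multiplying by the $O_{\tilde{\PP}}(1)$ eigenvalue factor from the previous step gives the claim; the value of $\hat{\boldsymbol{\alpha}}_{\mathcal{K}(N)}^{\text{\textit{lat}}}$ on the complementary, asymptotically negligible event does not affect an $o_{\tilde{\PP}}$ statement.

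The step I expect to be the crux is this mean bound on the noise term: one must use the mean-zero property of $\bm{\xi}_1$ — which is exactly orthogonality of the approximation error to the sieve span — to avoid a non-vanishing $N^2\norm{\EE_{\tilde{\PP}}\bm{\xi}_1}^2$ contribution, and then split the second moment by Cauchy--Schwarz in precisely the way that produces the net $\tilde{h}^{\text{\textit{lat}}}$ on one side and the Assumption~\ref{as:Newey_3 modified} rate on the other. The surplus factor $N^{-1/2}$ coming out of this bookkeeping is what turns the bound into the $o_{\tilde{\PP}}(\mathcal{K}(N)^{-\gamma_{\text{\textit{lat}}}})$ asserted here (and, downstream in Theorem~\ref{th:Newey_1}, is the source of the sub-$N^{-1}$ convergence of Regress-Later). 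Everything else — the positive-semidefinite eigenvalue bound, the i.i.d.\ second-moment identity, and Markov's inequality — is routine given Lemma~\ref{lemma conv matrix and smallest eigenvalue}.
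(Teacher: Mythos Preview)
Your proof is correct and follows essentially the same route as the paper's: represent the coefficient error via the least-squares residual formula, bound its squared norm by the inverse-eigenvalue factor times $N^{-2}\norm{(\bm{E}_K^{\text{\textit{lat}}})^T\bm{a}_K}^2$, reduce the latter to $\frac{1}{N}\,\EE_{\tilde{\PP}}\big[(a_T^K)^2(\bm{e}_K^{\text{\textit{lat}}})^T\bm{e}_K^{\text{\textit{lat}}}\big]$ using the i.i.d.\ structure, split by Cauchy--Schwarz into the Assumption~\ref{as:Newey_3 modified} rate and $\sqrt{N\,\tilde{h}^{\text{\textit{lat}}}}$, and finish with Markov and Lemma~\ref{lemma conv matrix and smallest eigenvalue}. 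You are in fact slightly more explicit than the paper about two points it leaves implicit: that the cross terms in $\EE_{\tilde{\PP}}\big[\norm{\sum_n\bm{\xi}_n}^2\big]$ vanish because $\EE_{\tilde{\PP}}[\bm{\xi}_1]=\bm{0}$ (orthogonality of $a_T^K$ to the first $K$ basis functions), and that the argument lives on the invertibility event of asymptotic probability one.
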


\noindent Lemma \ref{lemma conv matrix and smallest eigenvalue} shows that the sample second moment matrix of the basis functions converges to the identity matrix in the Frobenius norm. Lemma \ref{lemma conv quadratic form parameter} considers the convergence of the estimation error of the estimated coefficients. We now give the proof of the above theorem.

\begin{proof}[Proof of Theorem ~\ref{th:Newey_1}] Observe first that Assumption \ref{as:Newey_3 modified} implies
\begin{equation}\label{first equation main thm}
\EE_{\tilde{\PP}}\left[\left(g_T(A_T(Z)) - \left(\bm{\alpha}_K^{\text{\textit{lat}}}\right)^T \mathbf{e}_K^{\text{\textit{lat}}}(A_T(Z))\right)^2\right] \leq O\big(K^{- \gamma_{\text{\textit{lat}}}}\big)
\end{equation}
by Cauchy-Schwarz and that
$$
\EE_{\tilde{\PP}}\left[e_k^{\text{\textit{lat}}}(A_T(Z))\left(g_T(A_T(Z)) - (\bm{\alpha}_K^{\text{\textit{lat}}})^T \mathbf{e}_K^{\text{\textit{lat}}}(A_T(Z))\right)\right]=0,\, k=1,\ldots,K.
$$
Hence,
\begin{eqnarray}\label{eq tm.newey proof}
        \EE_{\tilde{\PP}}\left[ \left(g_T(A_T(Z))-\hat{g}_T^{\mathcal{K}(N)}(A_T(Z))\right)^2 \right]
        &  = &\EE_{\tilde{\PP}}\left[\left(g_T(A_T(Z)) - \left(\bm{\alpha}_{\mathcal{K}(N)}^{\text{\textit{lat}}}\right)^T \mathbf{e}_{\mathcal{K}(N)}^{\text{\textit{lat}}}(A_T(Z))\right)^2\right] \nonumber \\
&& {} +\EE_{\tilde{\PP}}\left[\left(\left(\bm{\alpha}_{\mathcal{K}(N)}^{\text{\textit{lat}}}\right)^T \mathbf{e}_{\mathcal{K}(N)}^{\text{\textit{lat}}}(A_T(Z))- \hat{g}_T^{\mathcal{K}(N)}(A_T(Z))\right)^2\right]  \nonumber \\
        & \leq & O\big(\mathcal{K}(N)^{- \gamma_{\text{\textit{lat}}}}\big) + \left( {\hat{\boldsymbol{\alpha}}}_{\mathcal{K}(N)}^{\text{\textit{lat}}}- \boldsymbol{\alpha}_{\mathcal{K}(N)}^{\text{\textit{lat}}} \right)^T\left( {\hat{\boldsymbol{\alpha}}}_{\mathcal{K}(N)}^{\text{\textit{lat}}}- \boldsymbol{\alpha}_{\mathcal{K}(N)}^{\text{\textit{lat}}} \right) \nonumber \\
        & = & O\big(\mathcal{K}(N)^{- \gamma_{\text{\textit{lat}}}}\big) + o_{\tilde{\PP}}\big(\mathcal{K}(N)^{- \gamma_{\text{\textit{lat}}}}\big)\nonumber \\
        & = & O_{\tilde{\PP}}(\mathcal{K}(N)^{- \gamma_{\text{\textit{lat}}}}),
\end{eqnarray}
where the inequality follows from (\ref{first equation main thm}) and the second equality from Lemma \ref{lemma conv quadratic form parameter}.
\end{proof}

The first equality in (\ref{eq tm.newey proof}) nicely illustrates that the sieve estimator is subject to two errors: an \textit{approximation error} 
$$\EE_{\tilde{\PP}}\left[\left(g_T(A_T(Z)) - \left(\bm{\alpha}_{\mathcal{K}(N)}^{\text{\textit{lat}}}\right)^T \mathbf{e}_{\mathcal{K}(N)}^{\text{\textit{lat}}}(A_T(Z))\right)^2\right],$$ 
and an \textit{estimation error} 
$$\EE_{\tilde{\PP}}\left[ \left(\left(\bm{\alpha}_{\mathcal{K}(N)}^{\text{\textit{lat}}}\right)^T \mathbf{e}_{\mathcal{K}(N)}^{\text{\textit{lat}}}(A_T(Z)) - \hat{g}_T^{\mathcal{K}(N)}(A_T(Z)) \right)^2\right].$$
It is worth emphasizing once more that for Regress-Later both are entirely driven by the speed of the approximation error and the growth rate of $\mathcal{K}(N)$ only. The fact that the estimation error is entirely driven by the speed of the approximation error and the growth rate of $\mathcal{K}(N)$ is a result of the fact that Equation (\ref{eq:regresseq-lat}) describes a nonstandard regression problem. Indeed, as $N$ increases the variance of the noise term, i.e.~$a_T(A_T(Z))$, converges to zero. We further comment on this at the end of this section.\\

Let us now discuss convergence rates for Regress-Now with sieves. We argued above that Regress-Later with sieves need not be considered as a nonparametric problem, because we do know $g_T$, $A_T$ and $\tilde{\PP}$. We therefore argued that we may replace the assumptions typically imposed in nonparametric settings by weaker ones. The situation is (slightly) different for Regress-Now with sieves. There we are interested in $g_{0,t}$ which is not given. Although, $g_{0,t}$ depends only on $g_T$, $A_T(Z)$, $\tilde{\PP}$ and the information generated by $Z$ up to time $t$, the problem of assessing certain properties to $g_{0,t}$ might be rather complicated so that one would tend to consider the problem as nonparametric even so $\tilde{\PP}$ is known. In such a case, convergence rates and conditions needed to obtain these rates may be taken from Theorem 1 in \citet{Newey_SeriesEstimators}. However, if we have some knowledge about $g_{0,t}$ (which is not unlikely since the problem is not nonparametric), then we may weaken the assumptions imposed in \citet{Newey_SeriesEstimators} similar as we did above for Regress-Later with sieves.

We now briefly outline the Regress-Now approach under assumptions
 similar to Assumption \ref{as:Newey_3 modified} and Condition (\ref{th:Newey eq rate}).
 \begin{assumption} \label{as:Newey_3 modified regress-now}
There are $\gamma_{\text{\textit{now}}} > 0$,
$\bm{\alpha}_K^{\text{\textit{now}}}$ s.t.
\begin{eqnarray*}
  \sqrt{ \EE_{\tilde{\PP}}\left[\left(g_{0,t}(A_t(Z)) - (\bm{\alpha}_K^{\text{\textit{now}}})^T \mathbf{e}_K^{\text{\textit{now}}}(A_t(Z))\right)^4\right]}
   & = & \sqrt{\int_{\RR^{s}} \left(g_{0,t}(u) - (\bm{\alpha}_K^{\text{\textit{now}}})^T \mathbf{e}_K^{\text{\textit{now}}}(u)\right)^4 \, \intd \tilde{\PP}^{A_t(Z)}(u)} \nonumber \\
  & = & \sqrt{\int_{\RR^{s}} a_{0,t}^K(u)^4 \, \intd \tilde{\PP}^{A_t(Z)}(u)} =O\big(K^{- \gamma_{\text{\textit{now}}}}\big).
\end{eqnarray*}
\end{assumption}

\noindent We also assume that
\begin{assumption} \label{as:Newey_1 regress-now}
$\left((X_{1}, A_t(Z_1)),\ldots,(X_{N}, A_t(Z_{N}))\right)$ are
i.i.d.~and $\EE_{\tilde{\PP}}\left[\big(p_{0,t}(A_T (Z))\big)^2|A_t(Z)\right]=\sigma^2$.
\end{assumption}

\noindent Introduce the net $\tilde{h}^{\text{\textit{now}}}: \NN \times
\NN \rightarrow \RR$ by
$$ \tilde{h}^{\text{\textit{now}}}(N,K):= \frac{1}{N} \; \EE_{\tilde{\PP}}\left[\left( \left( \bm{e}_K^{\text{\textit{now}}}(A_t(Z)) \right)^T \bm{e}_K^{\text{\textit{now}}}(A_t(Z))\right)^2\right].$$
We can now state

\begin{theorem}\label{th:Newey_1 regress now}
    Let Assumptions \ref{as:Newey_3 modified regress-now} and \ref{as:Newey_1 regress-now} be satisfied. Additionally, assume that there is a sequence $\mathcal{K}: \mathds{N} \rightarrow \mathds{N}$ such that
\begin{equation}\label{th:Newey eq rate regress now}
\tilde{h}^{\text{\textit{now}}}(N,\mathcal{K}(N)) \rightarrow 0
\mbox{ as } N \rightarrow \infty.
\end{equation}
Then
    \begin{equation} \label{eq:Newey_1 regress now}
        \EE_{\tilde{\PP}}\left[ \left(g_{0,t}(A_t(Z)) - \hat{g}_{0,t}^{\mathcal{K}(N)}(A_t(Z))\right)^2 \right] =
        O_{\tilde{\PP}}\left(\frac{\mathcal{K}(N)}{N}+\mathcal{K}(N)^{-\gamma_{\text{\textit{now}}}}\right).
    \end{equation}
\end{theorem}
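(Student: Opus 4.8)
The plan is to run the proof of Theorem~\ref{th:Newey_1} almost verbatim, the single genuinely new feature being that the Regress-Now regression equation~\eqref{eq:regresseq-now} carries a projection error $p_{0,t}(A_T(Z))$ whose conditional second moment is $\sigma^2>0$ by Assumption~\ref{as:Newey_1 regress-now}; this noise term does \emph{not} vanish with $N$, and it is exactly what produces the extra summand $\mathcal{K}(N)/N$. Throughout I write $K=\mathcal{K}(N)$ and suppress the ${}^{\text{\textit{now}}}$ superscripts. First I would note that Lemma~\ref{lemma conv matrix and smallest eigenvalue} and its proof go through word for word after replacing $\bm{e}_K^{\text{\textit{lat}}},\bm{E}_K^{\text{\textit{lat}}},\tilde h^{\text{\textit{lat}}}$ by their Regress-Now counterparts, since that proof uses only the i.i.d.\ assumption and the rate condition. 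Hence Condition~\eqref{th:Newey eq rate regress now} and Assumption~\ref{as:Newey_1 regress-now} give $\|\tfrac1N(\bm{E}_K)^T\bm{E}_K-I_K\|_F=o_{\tilde{\PP}}(1)$, and in particular $\lambda_{\min}(\tfrac1N(\bm{E}_K)^T\bm{E}_K)\stackrel{\tilde{\PP}}{\rightarrow}1$ and $\lambda_{\max}(\tfrac1N(\bm{E}_K)^T\bm{E}_K)\stackrel{\tilde{\PP}}{\rightarrow}1$.

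Next, exactly as in~\eqref{eq tm.newey proof}, I would use the orthonormality of $\{e_k^{\text{\textit{now}}}\}$ in $L_2(\RR^s,\eBB(\RR^s),\tilde{\PP}^{A_t(Z)})$ together with $\EE_{\tilde{\PP}}[e_k^{\text{\textit{now}}}(A_t(Z))a_{0,t}^K(A_t(Z))]=0$ for $k\le K$ to split the mean-square error as
\[
\EE_{\tilde{\PP}}\!\left[\bigl(g_{0,t}(A_t(Z))-\hat g_{0,t}^{K}(A_t(Z))\bigr)^2\right]
=\EE_{\tilde{\PP}}\!\left[a_{0,t}^{K}(A_t(Z))^2\right]+\bigl(\hat{\bm\alpha}_{K}-\bm\alpha_{K}\bigr)^T\bigl(\hat{\bm\alpha}_{K}-\bm\alpha_{K}\bigr),
\]
the cross term vanishing and the second summand collapsing to $\|\hat{\bm\alpha}_{K}-\bm\alpha_{K}\|^2$ by orthonormality. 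The first summand is bounded by $\sqrt{\EE_{\tilde{\PP}}[a_{0,t}^K(A_t(Z))^4]}=O(K^{-\gamma_{\text{\textit{now}}}})$ via Cauchy--Schwarz and Assumption~\ref{as:Newey_3 modified regress-now}, so it remains to estimate $\|\hat{\bm\alpha}_K-\bm\alpha_K\|^2$, the Regress-Now analogue of Lemma~\ref{lemma conv quadratic form parameter}.

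For that I would write $\bm{X}=\bm{E}_K\bm\alpha_K+\bm{a}+\bm{p}$ on the sample, with $\bm{a}=(a_{0,t}^K(A_t(z_n)))_n$ and $\bm{p}=(p_{0,t}(A_T(z_n)))_n$, so that $\hat{\bm\alpha}_K-\bm\alpha_K=(\tfrac1N(\bm{E}_K)^T\bm{E}_K)^{-1}\tfrac1N(\bm{E}_K)^T(\bm{a}+\bm{p})$. On the event $\{\lambda_{\min}(\tfrac1N(\bm{E}_K)^T\bm{E}_K)\ge\tfrac12\}$, which by the first paragraph has probability tending to one, and using $\|(\bm{E}_K)^T(\bm{a}+\bm{p})\|^2\le2\|(\bm{E}_K)^T\bm{a}\|^2+2\|(\bm{E}_K)^T\bm{p}\|^2$, one gets
\[
\|\hat{\bm\alpha}_K-\bm\alpha_K\|^2\le\frac{C}{N^2}\Bigl(\|(\bm{E}_K)^T\bm{a}\|^2+\|(\bm{E}_K)^T\bm{p}\|^2\Bigr).
\]
The approximation piece I would handle crudely: $\|(\bm{E}_K)^T\bm{a}\|^2\le N\,\lambda_{\max}(\tfrac1N(\bm{E}_K)^T\bm{E}_K)\,\|\bm{a}\|^2$, and since $\tfrac1N\|\bm{a}\|^2$ has mean $\EE_{\tilde{\PP}}[a_{0,t}^K(A_t(Z))^2]=O(K^{-\gamma_{\text{\textit{now}}}})$, the first paragraph and Markov give $\tfrac1{N^2}\|(\bm{E}_K)^T\bm{a}\|^2=O_{\tilde{\PP}}(K^{-\gamma_{\text{\textit{now}}}})$. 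The noise piece is the heart of the matter: since $p_{0,t}(A_T(Z))$ is orthogonal to every $\mathcal{F}_t$-measurable square-integrable random variable, $\EE_{\tilde{\PP}}[e_k^{\text{\textit{now}}}(A_t(Z))p_{0,t}(A_T(Z))]=0$, the $N$ summands making up the $k$-th entry of $(\bm{E}_K)^T\bm{p}$ are i.i.d.\ and centred, and by the tower property together with the homoskedasticity part of Assumption~\ref{as:Newey_1 regress-now} and orthonormality,
\[
\EE_{\tilde{\PP}}\!\left[e_k^{\text{\textit{now}}}(A_t(Z))^2\,p_{0,t}(A_T(Z))^2\right]=\EE_{\tilde{\PP}}\!\left[e_k^{\text{\textit{now}}}(A_t(Z))^2\,\EE_{\tilde{\PP}}\!\left[p_{0,t}(A_T(Z))^2\mid A_t(Z)\right]\right]=\sigma^2.
\]
Hence $\EE_{\tilde{\PP}}\bigl[\tfrac1{N^2}\|(\bm{E}_K)^T\bm{p}\|^2\bigr]=\tfrac1{N^2}\sum_{k=1}^KN\sigma^2=K\sigma^2/N$, so $\tfrac1{N^2}\|(\bm{E}_K)^T\bm{p}\|^2=O_{\tilde{\PP}}(K/N)$ by Markov. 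Adding the three contributions, $\|\hat{\bm\alpha}_K-\bm\alpha_K\|^2=O_{\tilde{\PP}}(K/N+K^{-\gamma_{\text{\textit{now}}}})$, and combining with the approximation error of the second paragraph yields the asserted rate $O_{\tilde{\PP}}(\mathcal{K}(N)/N+\mathcal{K}(N)^{-\gamma_{\text{\textit{now}}}})$.

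The calculation is essentially routine once Lemma~\ref{lemma conv matrix and smallest eigenvalue} has been transcribed to the Regress-Now basis; the one step I would flag as the crux — and the one separating this result from Theorem~\ref{th:Newey_1} — is the identity $\EE_{\tilde{\PP}}[\tfrac1{N^2}\|(\bm{E}_K)^T\bm{p}\|^2]=\mathcal{K}(N)\sigma^2/N$, where the homoskedasticity assumption enters through conditioning and orthonormality. It is precisely this irreducible $\mathcal{K}(N)\sigma^2/N$ term that prevents Regress-Now from converging faster than $N^{-1}$, in contrast to the Regress-Later bound of Theorem~\ref{th:Newey_1}; under our weakened hypotheses this is the analogue of Theorem~1 in \citet{Newey_SeriesEstimators}.
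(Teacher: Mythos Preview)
Your proof is correct and follows essentially the same strategy as the paper: transcribe Lemma~\ref{lemma conv matrix and smallest eigenvalue} to the Regress-Now basis, split the mean-square error into approximation plus $\|\hat{\bm\alpha}_K-\bm\alpha_K\|^2$, and identify the $\sigma^2\mathcal{K}(N)/N$ contribution via conditioning and homoskedasticity. The only technical variations are that the paper keeps $(\bm a+\bm p)$ together and shows the cross term $\EE_{\tilde{\PP}}[a_{0,t}^{K}p_{0,t}\,(\bm e_K)^T\bm e_K]=0$ explicitly (again by conditioning), whereas you bypass this via the $c_r$-inequality; and the paper bounds the $\bm a$-contribution through Cauchy--Schwarz with the fourth-moment Assumption~\ref{as:Newey_3 modified regress-now} and the rate condition~\eqref{th:Newey eq rate regress now} to get $o_{\tilde{\PP}}(\mathcal{K}(N)^{-\gamma_{\text{\textit{now}}}})$, whereas your eigenvalue bound gives only $O_{\tilde{\PP}}(\mathcal{K}(N)^{-\gamma_{\text{\textit{now}}}})$ --- harmless here since the approximation error already contributes at that order.
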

The result corresponds to Theorem 1 in \citet{Newey_SeriesEstimators}, but requires weaker assumptions; see also \citet{Stentoft_LSMC} for a similar result, where a nonparametric setting was used for the pricing of American options.
The result differs from the result of \citet{Newey_SeriesEstimators} and \citet{Stentoft_LSMC} as the convergence speed, $\mathcal{K}(N)\slash N+\mathcal{K}(N)^{-\gamma_{\text{\textit{now}}}}$, in \eqref{eq:Newey_1 regress now} is not independent of $\tilde{\PP}$. Notice the appearance of the term $\mathcal{K}(N) \slash N$ in the convergence rate of the Regress-Now estimator. This term does not appear in the convergence rate of the Regress-Later estimator. We further explain this difference at the end of this section.

The proof of Theorem \ref{th:Newey_1 regress now} is based on the following two lemmas. The first lemma is very similar to Lemma \ref{lemma conv matrix and smallest eigenvalue}. Its proof follows
along the lines of the proof of Lemma \ref{lemma conv matrix and smallest eigenvalue} and is therefore omitted. The second lemma is different from its counterpart for Regress-Later and it explains why we obtain
the term $\mathcal{K}(N) \slash N$ in Equation \eqref{eq:Newey_1 regress now}. Its proof is given in the appendix.

\begin{lemma}\label{lemma conv matrix and smallest eigenvalue regress now} If Condition (\ref{th:Newey eq rate regress now}) and Assumption \ref{as:Newey_1 regress-now} hold, we have
\begin{equation*}\label{eq matrix conv now}
\Big|\Big| \frac{1}{N}
\left(\bm{E}_{\mathcal{K}(N)}^{{\text{\textit{now}}}}\right)^T\bm{E}_{\mathcal{K}(N)}^{{\text{\textit{now}}}}-I_{\mathcal{K}(N)}\Big|\Big|_F=o_{\tilde{\PP}}(1),
\end{equation*}
where $||\cdot||_F$ is again the Frobenius norm. Moreover,
\begin{equation*}\label{eq eigenvalue conv regress now}
\lambda_{\min}\left( \frac{1}{N} \left(\bm{E}_{\mathcal{K}(N)}^{{\text{\textit{now}}}}\right)^T\bm{E}_{\mathcal{K}(N)}^{{\text{\textit{now}}}}\right)
\stackrel{\tilde{\PP}}{\rightarrow} 1,
\end{equation*}
where $\lambda_{\min}(\bm{A})$ denotes again the smallest eigenvalue
of a matrix $\bm{A}$.
\end{lemma}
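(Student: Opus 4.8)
The plan is to follow the proof of Lemma~\ref{lemma conv matrix and smallest eigenvalue} essentially verbatim, replacing each Regress-Later object by its Regress-Now analogue and noting in passing that the homoscedasticity part of Assumption~\ref{as:Newey_1 regress-now}, namely $\EE_{\tilde{\PP}}\big[(p_{0,t}(A_T(Z)))^2\mid A_t(Z)\big]=\sigma^2$, plays no role here (it enters only in the companion lemma that produces the $\mathcal K(N)/N$ term). The starting observation is that, since $\{e_k^{\text{\textit{now}}}\}_{k=1}^\infty$ is an orthonormal basis of $L_2\big(\RR^s,\eBB(\RR^s),\tilde{\PP}^{A_t(Z)}\big)$, the $(j,k)$ entry of $\frac1N\big(\bm{E}_{\mathcal K(N)}^{\text{\textit{now}}}\big)^T\bm{E}_{\mathcal K(N)}^{\text{\textit{now}}}$ equals the sample average $\frac1N\sum_{n=1}^N e_j^{\text{\textit{now}}}(A_t(Z_n))\,e_k^{\text{\textit{now}}}(A_t(Z_n))$, whose $\tilde{\PP}$-expectation is $\EE_{\tilde{\PP}}[e_j^{\text{\textit{now}}}(A_t(Z))\,e_k^{\text{\textit{now}}}(A_t(Z))]=\delta_{jk}$; thus this matrix has mean $I_{\mathcal K(N)}$, and only its fluctuations need to be controlled.

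First I would bound its expected squared Frobenius distance to the identity. Using the i.i.d.\ hypothesis of Assumption~\ref{as:Newey_1 regress-now} to pass from the variance of each sample average to $1/N$ times the variance of a single summand, then $\Var(\cdot)\le\EE[(\cdot)^2]$, and finally recognizing $\sum_{k}\big(e_k^{\text{\textit{now}}}(A_t(Z))\big)^2=\big(\bm{e}_{\mathcal K(N)}^{\text{\textit{now}}}(A_t(Z))\big)^T\bm{e}_{\mathcal K(N)}^{\text{\textit{now}}}(A_t(Z))$,
\begin{align*}
\EE_{\tilde{\PP}}\Big[\Big\|\tfrac1N\big(\bm{E}_{\mathcal K(N)}^{\text{\textit{now}}}\big)^T\bm{E}_{\mathcal K(N)}^{\text{\textit{now}}}-I_{\mathcal K(N)}\Big\|_F^2\Big]
&=\frac1N\sum_{j,k}\Var\!\big(e_j^{\text{\textit{now}}}(A_t(Z))\,e_k^{\text{\textit{now}}}(A_t(Z))\big)\\
&\le\frac1N\sum_{j,k}\EE_{\tilde{\PP}}\big[\big(e_j^{\text{\textit{now}}}(A_t(Z))\big)^2\big(e_k^{\text{\textit{now}}}(A_t(Z))\big)^2\big]\\
&=\frac1N\,\EE_{\tilde{\PP}}\Big[\Big(\sum_{k=1}^{\mathcal K(N)}\big(e_k^{\text{\textit{now}}}(A_t(Z))\big)^2\Big)^2\Big]=\tilde{h}^{\text{\textit{now}}}\big(N,\mathcal K(N)\big).
\end{align*}
Since $\tilde{h}^{\text{\textit{now}}}(N,\mathcal K(N))\to0$ by Condition~\eqref{th:Newey eq rate regress now}, Markov's inequality yields the claimed $o_{\tilde{\PP}}(1)$ bound on the Frobenius norm.

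Finally, for the smallest-eigenvalue statement I would invoke the elementary perturbation bound: for any symmetric matrix $\bm A$ one has $|\lambda_{\min}(\bm A)-1|\le\|\bm A-I_{\mathcal K(N)}\|_F$ — write $\lambda_{\min}(\bm A)=\min_{\|v\|=1}v^T\bm A v$ and $v^T\bm A v=\|v\|^2+v^T(\bm A-I_{\mathcal K(N)})v$, then dominate the spectral norm by the Frobenius norm. Applying this with $\bm A=\frac1N\big(\bm{E}_{\mathcal K(N)}^{\text{\textit{now}}}\big)^T\bm{E}_{\mathcal K(N)}^{\text{\textit{now}}}$ and using the first part gives $\lambda_{\min}(\bm A)\stackrel{\tilde{\PP}}{\rightarrow}1$. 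There is no real obstacle; the only point worth stressing is that the variance sum collapses exactly to the net $\tilde{h}^{\text{\textit{now}}}(N,\mathcal K(N))$, which is precisely why Condition~\eqref{th:Newey eq rate regress now} — and nothing stronger — is the right hypothesis. Everything else is orthonormality, the i.i.d.\ structure, and Markov's inequality.
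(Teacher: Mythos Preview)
Your proposal is correct and follows essentially the same route as the paper, which explicitly states that the proof ``follows along the lines of the proof of Lemma~\ref{lemma conv matrix and smallest eigenvalue}'' with the Regress-Later objects replaced by their Regress-Now analogues. Your observation that the homoscedasticity clause in Assumption~\ref{as:Newey_1 regress-now} is irrelevant here is also accurate, and your Rayleigh-quotient justification of the eigenvalue perturbation bound is a slightly more explicit version of the paper's one-line appeal to the Frobenius norm dominating the smallest eigenvalue of $\bm{A}-I_{\mathcal K(N)}$.
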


\begin{lemma}\label{lemma conv quadratic form parameter regress now} If Condition (\ref{th:Newey eq rate regress now}) and Assumptions \ref{as:Newey_3 modified regress-now} and  \ref{as:Newey_1 regress-now}
hold, we have
\begin{equation*}
\left( {\hat{\boldsymbol{\alpha}}}_{\mathcal{K}(N)}^{\text{\textit{now}}}-
\boldsymbol{\alpha}_{\mathcal{K}(N)}^{\text{\textit{now}}} \right)^T\left(
{\hat{\boldsymbol{\alpha}}}_{\mathcal{K}(N)}^{\text{\textit{now}}}-
\boldsymbol{\alpha}_{\mathcal{K}(N)}^{\text{\textit{now}}}
\right)=O_{\tilde{\PP}}\left(\frac{\mathcal{K}(N)}{N}\right)+o_{\tilde{\PP}}\big({\mathcal{K}(N)}^{-
\gamma_{\text{\textit{now}}}}\big).
\end{equation*}
\end{lemma}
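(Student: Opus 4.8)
The plan is to follow the argument of \citet{Newey_SeriesEstimators} underlying the proof of Lemma~\ref{lemma conv quadratic form parameter}, the one genuinely new feature being that the Regress-Now regression carries \emph{two} error terms orthogonal to the regressors rather than one: the approximation error $a_{0,t}^{\mathcal{K}(N)}(A_t(Z))$ and the (non-vanishing) projection error $p_{0,t}(A_T(Z))$. Write $\bm{E}:=\bm{E}_{\mathcal{K}(N)}^{\text{\textit{now}}}$ and let $\bm{a}$ and $\bm{p}$ be the $N$-vectors with $n$-th entries $a_{0,t}^{\mathcal{K}(N)}(A_t(z_n))$ and $p_{0,t}(A_T(z_n))$. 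The representation $X=g_{0,t}^{\mathcal{K}(N)}(A_t(Z))+a_{0,t}^{\mathcal{K}(N)}(A_t(Z))+p_{0,t}(A_T(Z))$ gives, in sample form, $\bm{X}=\bm{E}\,\boldsymbol{\alpha}_{\mathcal{K}(N)}^{\text{\textit{now}}}+\bm{a}+\bm{p}$, so that on the event that $\bm{E}^T\bm{E}$ is invertible,
\[
\hat{\boldsymbol{\alpha}}_{\mathcal{K}(N)}^{\text{\textit{now}}}-\boldsymbol{\alpha}_{\mathcal{K}(N)}^{\text{\textit{now}}}=\widehat{Q}^{-1}\,\tfrac1N\bm{E}^T(\bm{a}+\bm{p}),\qquad \widehat{Q}:=\tfrac1N\bm{E}^T\bm{E},
\]
and hence the quadratic form of interest is at most $2\,\lambda_{\min}(\widehat{Q})^{-2}\bigl(\|\tfrac1N\bm{E}^T\bm{a}\|^2+\|\tfrac1N\bm{E}^T\bm{p}\|^2\bigr)$. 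By Lemma~\ref{lemma conv matrix and smallest eigenvalue regress now}, $\lambda_{\min}(\widehat{Q})\stackrel{\tilde{\PP}}{\rightarrow}1$, so $\widehat{Q}^{-1}$ is well defined with probability tending to one and $\lambda_{\min}(\widehat{Q})^{-2}=O_{\tilde{\PP}}(1)$. It therefore remains to show $\|\tfrac1N\bm{E}^T\bm{p}\|^2=O_{\tilde{\PP}}(\mathcal{K}(N)/N)$ and $\|\tfrac1N\bm{E}^T\bm{a}\|^2=o_{\tilde{\PP}}(\mathcal{K}(N)^{-\gamma_{\text{\textit{now}}}})$, and combine.

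First I would treat the projection-error term. Its $k$-th coordinate is the sample mean $\tfrac1N\sum_n e_k^{\text{\textit{now}}}(A_t(z_n))\,p_{0,t}(A_T(z_n))$; since $g_{0,t}(A_t(Z))=\EE_{\tilde{\PP}}[X|\mathcal{F}_t]$ we have $\EE_{\tilde{\PP}}[p_{0,t}(A_T(Z))|\mathcal{F}_t]=0$ and $e_k^{\text{\textit{now}}}(A_t(Z))$ is $\mathcal{F}_t$-measurable, so by Assumption~\ref{as:Newey_1 regress-now} the summands are i.i.d.\ mean-zero and the cross terms vanish; the homoskedasticity part of Assumption~\ref{as:Newey_1 regress-now} together with $\EE_{\tilde{\PP}}[e_k^{\text{\textit{now}}}(A_t(Z))^2]=1$ then gives $\EE_{\tilde{\PP}}[(\tfrac1N\bm{E}^T\bm{p})_k^2]=\sigma^2/N$, whence $\EE_{\tilde{\PP}}[\|\tfrac1N\bm{E}^T\bm{p}\|^2]=\sigma^2\mathcal{K}(N)/N$ and Markov's inequality yields the $O_{\tilde{\PP}}(\mathcal{K}(N)/N)$ bound. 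For the approximation-error term the reasoning is parallel but now uses orthonormal-basis orthogonality: $a_{0,t}^{\mathcal{K}(N)}$ is orthogonal to $e_1^{\text{\textit{now}}},\dots,e_{\mathcal{K}(N)}^{\text{\textit{now}}}$, so $\EE_{\tilde{\PP}}[e_k^{\text{\textit{now}}}(A_t(Z))a_{0,t}^{\mathcal{K}(N)}(A_t(Z))]=0$, the cross terms again vanish, and
\[
\EE_{\tilde{\PP}}\bigl[\|\tfrac1N\bm{E}^T\bm{a}\|^2\bigr]\ \le\ \tfrac1N\,\EE_{\tilde{\PP}}\Bigl[\bigl((\bm{e}_{\mathcal{K}(N)}^{\text{\textit{now}}})^T\bm{e}_{\mathcal{K}(N)}^{\text{\textit{now}}}\bigr)(A_t(Z))\;a_{0,t}^{\mathcal{K}(N)}(A_t(Z))^2\Bigr]\ \le\ \sqrt{\tilde{h}^{\text{\textit{now}}}(N,\mathcal{K}(N))}\;\sqrt{\EE_{\tilde{\PP}}\bigl[a_{0,t}^{\mathcal{K}(N)}(A_t(Z))^4\bigr]}
\]
by Cauchy--Schwarz. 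By Condition~\eqref{th:Newey eq rate regress now} the first factor is $o(1)$, and by Assumption~\ref{as:Newey_3 modified regress-now} the second is $O(\mathcal{K}(N)^{-\gamma_{\text{\textit{now}}}})$, so the product is $o(\mathcal{K}(N)^{-\gamma_{\text{\textit{now}}}})$, which Markov upgrades to $o_{\tilde{\PP}}(\mathcal{K}(N)^{-\gamma_{\text{\textit{now}}}})$. Collecting the two bounds with $\lambda_{\min}(\widehat{Q})^{-2}=O_{\tilde{\PP}}(1)$ proves the lemma.

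The step I expect to demand the most care is the approximation-error bound, for the same reason as in the Regress-Later case: the entries of $\bm{a}$ involve $a_{0,t}^{\mathcal{K}(N)}$, which is not a fixed function but varies with $N$ through $\mathcal{K}(N)$, so the coordinatewise variance bound $\sum_k\EE_{\tilde{\PP}}[e_k^{\text{\textit{now}}}(A_t(Z))^2 a_{0,t}^{\mathcal{K}(N)}(A_t(Z))^2]$ has to be collapsed into the single expectation $\EE_{\tilde{\PP}}[((\bm{e}_{\mathcal{K}(N)}^{\text{\textit{now}}})^T\bm{e}_{\mathcal{K}(N)}^{\text{\textit{now}}})(A_t(Z))\,a_{0,t}^{\mathcal{K}(N)}(A_t(Z))^2]$ before Cauchy--Schwarz can separate it into the growth-controlling net $\tilde{h}^{\text{\textit{now}}}$ and the fourth-moment approximation rate of Assumption~\ref{as:Newey_3 modified regress-now}. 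The second delicate point, and the source of the difference from Lemma~\ref{lemma conv quadratic form parameter}, is that the two noise terms must be handled separately: both are orthogonal to the regressors, but for different reasons---conditional-mean-zero for $\bm{p}$ and unconditional basis orthogonality for $\bm{a}$---and it is precisely the non-vanishing conditional variance $\sigma^2$ of $\bm{p}$ that produces the extra $\mathcal{K}(N)/N$ term, which has no counterpart in Regress-Later where the regression noise $a_T^K(A_T(Z))$ vanishes in $L_2$.
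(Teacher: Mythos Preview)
Your proof is correct and follows essentially the same route as the paper's. The only cosmetic difference is that you decouple $\bm{a}$ and $\bm{p}$ via $\|u+v\|^2\le 2(\|u\|^2+\|v\|^2)$ before taking expectations, whereas the paper expands $(\bm{a}+\bm{p})^T\bm{E}\bm{E}^T(\bm{a}+\bm{p})$ in expectation and shows the cross term $\EE_{\tilde{\PP}}\!\left[a_{0,t}^{\mathcal{K}(N)}(A_t(Z))\,p_{0,t}(A_T(Z))\,(\bm{e}_{\mathcal{K}(N)}^{\text{\textit{now}}})^T\bm{e}_{\mathcal{K}(N)}^{\text{\textit{now}}}\right]$ vanishes directly from $\EE_{\tilde{\PP}}[p_{0,t}(A_T(Z))\mid A_t(Z)]=0$; otherwise the ingredients---the eigenvalue bound from Lemma~\ref{lemma conv matrix and smallest eigenvalue regress now}, homoskedasticity for the $\bm{p}$-term, Cauchy--Schwarz together with Assumption~\ref{as:Newey_3 modified regress-now} and Condition~\eqref{th:Newey eq rate regress now} for the $\bm{a}$-term, and Markov's inequality---are identical.
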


\noindent We now give the proof of the above theorem.

\begin{proof}[Proof of Theorem ~\ref{th:Newey_1 regress now}] Similar to the proof of Theorem \ref{th:Newey_1} observe first that Assumption \ref{as:Newey_3 modified regress-now} implies
\begin{equation}\label{first equation main thm regress now}
\EE_{\tilde{\PP}}\left[\left(g_{0,t}(A_t(Z)) -
\left(\bm{\alpha}_K^{\text{\textit{now}}}\right)^T
\mathbf{e}_K^{\text{\textit{now}}}(A_t(Z))\right)^2\right] \leq
O\big(K^{- \gamma_{\text{\textit{now}}}}\big)
\end{equation}
by Cauchy-Schwarz. Moreover,
$$
\EE_{\tilde{\PP}}\left[e_k^{\text{\textit{now}}}(A_t(Z))\left(g_{0,t}(A_t(Z))
- \left(\bm{\alpha}_K^{\text{\textit{now}}}\right)^T
\mathbf{e}_K^{\text{\textit{now}}}(A_t(Z))\right)\right]=0,\,
k=1,\ldots,K.
$$
Hence,
\begin{eqnarray}\label{eq tm.newey proof regress now}
       \EE_{\tilde{\PP}}\left[ \left(g_{0,t}(A_t(Z))-\hat{g}_{0,t}^{\mathcal{K}(N)}(A_t(Z))\right)^2 \right]
        &  = & \EE_{\tilde{\PP}}\left[\left(\left(\bm{\alpha}_{\mathcal{K}(N)}^{\text{\textit{now}}}\right)^T \mathbf{e}_{\mathcal{K}(N)}^{\text{\textit{now}}}(A_t(Z))- \hat{g}_{0,t}^{\mathcal{K}(N)}(A_t(Z))\right)^2\right] \nonumber \\
&& {} +\EE_{\tilde{\PP}}\left[\left(g_{0,t}(A_t(Z)) - (\bm{\alpha}_{\mathcal{K}(N)}^{\text{\textit{now}}})^T \mathbf{e}_{\mathcal{K}(N)}^{\text{\textit{lat}}}(A_t(Z))\right)^2\right] \nonumber \\
        & \leq & \left( {\hat{\boldsymbol{\alpha}}}_{\mathcal{K}(n)}^{\text{\textit{now}}}- \boldsymbol{\alpha}_{\mathcal{K}(n)}^{\text{\textit{now}}} \right)^T\left( {\hat{\boldsymbol{\alpha}}}_{\mathcal{K}(N)}^{\text{\textit{now}}}- \boldsymbol{\alpha}_{\mathcal{K}(N)}^{\text{\textit{now}}} \right) \nonumber \\
&& {}+ O\big(\mathcal{K}(N)^{- \gamma_{\text{\textit{now}}}}\big) \nonumber \\
        & = & O_{\tilde{\PP}}\left(\frac{\mathcal{K}(N)}{N}\right) + O\big(\mathcal{K}(N)^{- \gamma_{\text{\textit{now}}}}\big),
\end{eqnarray}
where the inequality follows from (\ref{first equation main thm regress now})
and the second equality from Lemma \ref{lemma conv quadratic form parameter regress now}.
\end{proof}
As for Regress-Later the first equality in (\ref{eq tm.newey proof regress now}) illustrates that also Regress-Now is subject to two errors: an \textit{approximation error} $$\EE_{\tilde{\PP}}\left[\left(g_{0,t}(A_t(Z)) - (\bm{\alpha}_{\mathcal{K}(N)}^{\text{\textit{now}}})^T \mathbf{e}_{\mathcal{K}(N)}^{\text{\textit{lat}}}(A_t(Z))\right)^2\right],$$ 
and an \textit{estimation error} 
$$\EE_{\tilde{\PP}}\left[\left(\left(\bm{\alpha}_{\mathcal{K}(N)}^{\text{\textit{now}}}\right)^T \mathbf{e}_{\mathcal{K}(N)}^{\text{\textit{now}}}(A_t(Z))- \hat{g}_{0,t}^{\mathcal{K}(N)}(A_t(Z))\right)^2\right].$$
Notice that as for Regress-Later the approximation error is also driven by the speed of the approximation error and the growth rate of $\mathcal{K}(N)$. However, note also the difference compared to Regress-Later: The estimation error is driven by the ratio of $\mathcal{K}(N)$ to $N$ and the approximation error. The difference can also be seen from the following equations where we omitted the superscripts $\textit{now}$ and $\textit{lat}$
\begin{eqnarray*}
 \left( {\hat{\boldsymbol{\alpha}}}_K-\boldsymbol{\alpha}_K \right)&=& (\bm{E}_K^T \bm{E}_K)^{-1}\bm{E}_K^T(\bm{X}-\bm{E}_K\boldsymbol{\alpha}_K) \\
                                                                &=& (\bm{E}_K^T \bm{E}_K)^{-1}\bm{E}_K^T ((\bm{X}-\bm{E}\boldsymbol{\alpha})+(\bm{E}\boldsymbol{\alpha}-\bm{E}_K\boldsymbol{\alpha}_K)) \\
                                                                &=& (\bm{E}_K^T \bm{E}_K)^{-1}\bm{E}_K^T (\bm{p}+\bm{a}^K),
\end{eqnarray*}
where $\bm{E}$ is an infinite-dimensional matrix containing all basis functions and $\bm{\alpha}$ is the true (infinite-dimensional) parameter vector. Here $\bm{p}$ gives the projection error which is zero for Regress-Later. However, it is unequal to zero for Regress-Now. Moreover, for both Regress-Now and Regress-Later the variance of the approximation error $a^{K}$ converges to zero. However, as can be seen from the proof of Lemma \ref{lemma conv quadratic form parameter regress now} in the Appendix \ref{proofs}, it is the projection error that contributes the rate $\mathcal{K}(N) \slash N$ to  the estimation error in Regress-Now.\\
The absence of the term $\mathcal{K}(N) \slash N$ in the mean-square error of Regress-Later makes it plausible that the Regress-Later estimator may potentially converge faster than the Regress-Now estimator. We deliberately state here ``potentially'' as the ultimate convergence rate depends on the $\gamma_{\textit{now}}$ and $\gamma_{\textit{lat}}$ which are problem-dependent. In particular, the choice of basis plays an important role in the determination of $\gamma_{\textit{now}}$ and $\gamma_{\textit{lat}}$. However, it is clear that the Regress-Now convergence rate can never be faster than $N^{-1}$. This follows simply from the fact that the best we can hope for is that $g_{0,t}$ is contained in the span of finitely many basis functions. Then the approximation error vanishes and we are left with the rate $N^{-1}$. In contrast, in Regress-Later if Condition (\ref{th:Newey eq rate}) is fulfilled with $\mathcal{K}(N) \propto N^a$ for some $0<a<1$, then the convergence rate for Regress-Later equals $N^{-a \; \gamma_{\textit{lat}}}$. We can see that for the right combination of $a$ and $\gamma_{\textit{lat}}$ it is possible to achieve a convergence rate that is even faster than $N^{-1}$. An example will be provided in Section \ref{sec:piecewise linear}.\\

We finally comment on the fact that the discussed convergence rates pertain to slightly different problems. The speed of convergence for the Regress-Now estimator refers to convergence to the conditional expectation function $g_{0,t}$. On the contrary, the discussed convergence rate for the Regress-Later estimator pertains to convergence to the payoff function $X$. As discussed in Section \ref{sec:regress-later} in Regress-Later we achieve the approximation to the conditional expectation function by applying the conditional expectation operator to the estimated payoff function, $\hat{g}_T^{\mathcal{K}(N)}$. We thereby do not incur a projection error as long as the conditional expectations of the basis functions have closed-form solutions. We can show that the convergence rate of the conditional expectation of the Regress-Later estimator to the conditional expectation of $X$, i.e. $g_{0,t}(A_t(Z))$ is implied by the convergence of the Regress-Later estimator to $X$. More explicitly we have

\begin{eqnarray*}
    \EE_{\tilde{\PP}} \left[ \left( g_{0,t}(A_t(Z)) - \EE_{\tilde{\PP}}\left[ \hat{g}_T^{\mathcal{K}(N)}(A_T(Z)) \big | \eFF_t\right]\right)^2\right]  
    &=& \EE_{\tilde{\PP}}\left[ \left( \EE_{\tilde{\PP}}\left[ X-\hat{g}_T^{\mathcal{K}(N)}(A_T(Z)) \big | \eFF_t\right]\right)^2\right] \\
    &\leq& \EE_{\tilde{\PP}} \left[ \EE_{\tilde{\PP}} \left[ \left( X- \hat{g}_T^{\mathcal{K}(N)}(A_T(Z))\right)^2\big | \eFF_t \right] \right] \\
    &=& \EE_{\tilde{\PP}} \left[ \left( X- \hat{g}_T^{\mathcal{K}(N)}(A_T(Z))\right)^2\right],
\end{eqnarray*}
where the first inequality follows from Jensen's inequality for conditional expectations and the last equality uses the projection law of expectations.


\section{Orthonormal piecewise linear functions as sieves}\label{sec:piecewise linear}
In this section, we show that the convergence rate in mean-square for Regress-Later can indeed be faster than $N^{-1}$. To present this claim in a very simple set-up and to avoid technicalities that are of no relevance for our claim we consider a compact interval.
The convergence rate highly depends on the properties of the basis. Typically applied bases are polynomials.
Here we consider a basis consisting of piecewise linear functions, for which both the construction as well as the analysis in view of establishing convergence rates simplifies. Moreover, for this basis it is reasonable to expect that we can compute conditional expectations exactly. Importantly, the suggested basis is by construction orthogonal and can be easily set up. Moreover, the results of Theorem \ref{th:Newey_1} can be explicitly calculated for the piecewise linear functions as the derived $\gamma_{\textit{lat}}$ applies to a large class of functions.

We now outline the Regress-Later estimation with piecewise linear functions as sieves.
Let $D=[a_1,a_2] \subset \RR$ denote the support of $g_T(A_T(Z))$. We construct an orthonormal basis on $L_2(D, \mathcal{B}(D), \tilde{\PP}^{A_T(Z)})$ based on non-overlapping linear functions.
 We require the following assumption in order to construct our basis functions

\begin{assumption}\label{as:pos dens pw}
     $g_T(A_T(Z))$ has a density w.r.t.~Lebesgue measure which is a positive continuous function on $D$.
\end{assumption}
Then, the domain $D$ is chopped into $K$ intervals, $[b_k,b_{k+1})$, $k=1,\ldots,K+1$, such that $\text{\textit{Pr}}\left(b_k \leq A_T(Z) < b_{k+1}\right)= 1 \slash K$, $\forall k=1,...,K$. Assumption \ref{as:pos dens pw} ensures that as the truncation parameter $K$ grows, the intervals can be made arbitrarily small and cover each probability $1 \slash K$.  Define $K$ non-overlapping indicator functions

\begin{equation*}
    \Ind_{k}^{\textit{lat}}(u) := 
    \begin{cases} 1 & \text {if } u \in [b_{k}, b_{k+1}) \\
    0 & \text{otherwise}
    \end{cases}
\end{equation*}
for $k=1,..., K$.
By construction the indicator functions are orthogonal. On each interval two basis functions are now defined:
\begin{align*}
    e_{0k}^{\textit{lat}}(u):=C_{0k} \Ind_k^{\textit{lat}}(u) \\
    e_{1k}^{\textit{lat}}(u) :=C_{1k} \Ind_k^{\textit{lat}}(u) (u-c_k),
\end{align*}
where $C_{0k}$, $C_{1k}$ and $c_k$ are chosen such that $e_{0k}(A_T(Z))$ and $e_{1j}(A_T(Z))$ are orthonormal $\forall k, j$. Hence, $C_{0k}=\sqrt{K}$, $C_{1k}=1 \slash \sqrt{\EE_{\tilde{\PP}}\left[\Ind_k^{\textit{lat}}(A_T(Z))(A_T(Z)-c_k)^2\right]}$ and $c_k = K \, \EE_{\tilde{\PP}}[\Ind_k^{\textit{lat}}(A_T(Z)) A_T(Z)]$.
By construction we then have the following orthonormality results
\begin{align*}
    & \EE_{\tilde{\PP}}\left[e_{0k}^{\textit{lat}}(A_T(Z)) e_{0j}^{\textit{lat}}(A_T(Z))\right]= \delta_{kj} \\
    & \EE_{\tilde{\PP}}\left[e_{1k}^{\textit{lat}}(A_T(Z)) e_{1j}^{\textit{lat}}(A_T(Z))\right]= \delta_{kj} \\
    & \EE_{\tilde{\PP}}\left[e_{0k}^{\textit{lat}}(A_T(Z)) e_{1j}^{\textit{lat}}(A_T(Z))\right] = 0,
\end{align*}
where $\delta_{kj}$ denotes the Kronecker delta.

\begin{assumption}\label{as:bounded_2ndderivative}
$g_T$ is twice continuously differentiable on $(a_1,a_2)$ and there is a $B < \infty$ such that $\sup_{u \in (a_1,a_2)} \abs{g_T''(u)} \leq B$.
\end{assumption}

\begin{lemma}\label{lem:approx_error_piecewise}
If Assumptions \ref{as:pos dens pw} and \ref{as:bounded_2ndderivative} hold, the deterministic approximation error vanishes as $K \to \infty$;
\begin{equation*} \label{eq:truncation_convergence_rate_d}
    \sqrt{\EE_{\tilde{\PP}} \left[ \left( g_T(A_T(Z)) - g_T^K(A_T(Z)) \right)^4 \right]} = O(K^{-4}).
\end{equation*}
\end{lemma}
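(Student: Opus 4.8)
The plan is to reduce the global $L_4$ claim to a uniform (sup-norm) estimate on each of the $K$ cells, exploiting that the basis functions are supported on disjoint intervals. Write $I_k:=[b_k,b_{k+1})$ and observe that $\mathrm{span}\{e_{0k}^{\textit{lat}},e_{1k}^{\textit{lat}}\}$ is exactly the space of affine functions supported on $I_k$; since these spaces are mutually $L_2(\tilde{\PP}^{A_T(Z)})$-orthogonal (disjoint supports), the orthogonal projection $g_T^K=\sum_k(\alpha_{0k}^{\textit{lat}}e_{0k}^{\textit{lat}}+\alpha_{1k}^{\textit{lat}}e_{1k}^{\textit{lat}})$ decouples, so that the restriction $g_T^K\big|_{I_k}$ is precisely the best affine $L_2(\tilde{\PP}^{A_T(Z)})$-approximant of $g_T$ on $I_k$. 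Because $\tilde{\PP}^{A_T(Z)}$ is a probability measure and $D$ is bounded, it then suffices to prove $\sup_{u\in D}\abs{g_T(u)-g_T^K(u)}=O(K^{-2})$: indeed this gives $\EE_{\tilde{\PP}}\big[(g_T(A_T(Z))-g_T^K(A_T(Z)))^4\big]\le\big(\sup_{u\in D}\abs{g_T(u)-g_T^K(u)}\big)^4=O(K^{-8})$, whose square root is $O(K^{-4})$.

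To establish the sup-norm bound I would first extract from Assumption \ref{as:pos dens pw} that the density $f$ of $A_T(Z)$ satisfies $0<m:=\min_D f$ (positivity and continuity on the compact set $D$), so that $1/K=\int_{I_k}f\,\intd u\ge m\,\delta_k$ forces $\delta_k:=b_{k+1}-b_k\le 1/(mK)$ for every $k$. Next, by Assumption \ref{as:bounded_2ndderivative} and Taylor's theorem, the first-order Taylor polynomial $\ell_k$ of $g_T$ about the midpoint of $I_k$ (which, being affine, belongs to $\mathrm{span}\{e_{0k}^{\textit{lat}},e_{1k}^{\textit{lat}}\}$) satisfies $\abs{g_T(u)-\ell_k(u)}\le\tfrac{B}{8}\delta_k^2$ for $u\in I_k$, hence $\|g_T-\ell_k\|_{L_2(\tilde{\PP}^{A_T(Z)},I_k)}\le\tfrac{B}{8}\delta_k^2K^{-1/2}$. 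Since $\ell_k$ is admissible in the local projection, $\|g_T-g_T^K\|_{L_2(\tilde{\PP}^{A_T(Z)},I_k)}\le\|g_T-\ell_k\|_{L_2(\tilde{\PP}^{A_T(Z)},I_k)}$, and the triangle inequality yields $\|g_T^K-\ell_k\|_{L_2(\tilde{\PP}^{A_T(Z)},I_k)}\le\tfrac{B}{4}\delta_k^2K^{-1/2}$.

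The step I expect to be the crux is converting this $L_2$ control into a pointwise one. The function $q_k:=g_T^K-\ell_k$ is affine on $I_k$, so one may invoke the elementary inverse inequality $\|q\|_{L_\infty(I)}^2\le(C/\abs{I})\int_I q^2\,\intd u$, valid for every affine $q$ on an interval $I$ with a universal constant $C$ (obtained by writing $q(u)=\alpha+\beta(u-m_I)$ and comparing $\int_I q^2=\alpha^2\abs{I}+\beta^2\abs{I}^3/12$ with $\|q\|_{L_\infty(I)}^2\le 2\alpha^2+\beta^2\abs{I}^2/2$). Combined with $\int_{I_k}q_k^2\,\intd u\le m^{-1}\|q_k\|_{L_2(\tilde{\PP}^{A_T(Z)},I_k)}^2$ this gives $\|q_k\|_{L_\infty(I_k)}\le\mathrm{const}\cdot\delta_k^{3/2}K^{-1/2}$, and then $\delta_k\le 1/(mK)$ makes both $\delta_k^2$ and $\delta_k^{3/2}K^{-1/2}$ of order $K^{-2}$, uniformly in $k$. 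A final triangle inequality $\|g_T-g_T^K\|_{L_\infty(I_k)}\le\|g_T-\ell_k\|_{L_\infty(I_k)}+\|q_k\|_{L_\infty(I_k)}$ followed by a maximum over $k=1,\dots,K$ delivers $\sup_{u\in D}\abs{g_T(u)-g_T^K(u)}=O(K^{-2})$, from which the statement follows as above. The only remaining small print is that boundedness of $g_T''$ lets $g_T$ extend to a $C^1$ function on the closed interval $[a_1,a_2]$, so that Taylor's theorem is legitimately applied on every cell, including the two abutting the endpoints of $D$.
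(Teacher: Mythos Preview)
Your argument is correct and takes a genuinely different route from the paper. The paper works directly in $L_4$: it computes the true projection coefficients $\alpha_{0k},\alpha_{1k}$ by Taylor-expanding $g_T$ around the centroid $c_k=K\,\EE_{\tilde\PP}[\Ind_k A_T(Z)]$, writes $g_T-g_T^K$ on each cell as the sum of three explicit remainder terms (the local Taylor remainder and two coefficient corrections), applies Lo\`eve's $c_r$-inequality, and then bounds every integral using $\delta_k\le 1/(mK)$ together with the crude estimate $C_{1k}^2\le 12(KM)^3/m$. You instead upgrade to the stronger sup-norm bound $\|g_T-g_T^K\|_{L_\infty(D)}=O(K^{-2})$: rather than computing $g_T^K$ explicitly, you compare it with the Taylor polynomial $\ell_k$, exploit that $g_T^K$ is the local best affine $L_2(\tilde\PP)$-approximant to transfer the $L_2$ bound from $g_T-\ell_k$ to the affine function $q_k=g_T^K-\ell_k$, and then use the elementary inverse (Markov-type) inequality for affine functions on an interval to convert $L_2$ control of $q_k$ into $L_\infty$ control. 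Your approach is cleaner and more conceptual, avoids any handling of the normalizing constants $C_{1k}$, and delivers the uniform estimate as a byproduct; the paper's approach is more computational but entirely self-contained (no auxiliary inverse inequality) and produces explicit constants along the way.
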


\begin{proof}
See Appendix \ref{proofs}.
\end{proof}
\noindent Consequently, Assumption \ref{as:Newey_3 modified} is satisfied with $\gamma_{\textit{lat}}=4$. For Equation \eqref{th:Newey eq rate} we obtain
\begin{eqnarray}\label{eq:tildeh pw}
    \tilde{h}^{\textit{lat}}(N,\mathcal{K}(N))
&= & \frac{1}{N} \EE_{\tilde{\PP}} \left[ \left( \sum_{k=1}^{\mathcal{K}(N)} \left( e_{0k}^{\textit{lat}} (A_T(Z))^2 + e_{1k}^{\textit{lat}}(A_T(Z))^2 \right) \right)^2 \right] \nonumber \\
    &= & \frac{1}{N} \EE_{\tilde{\PP}} \left[ \sum_{\ell=1}^{\mathcal{K}(n)} \sum_{j=1}^{\mathcal{K}(n)} \left( e_{0\ell}^{\textit{lat}}(A_T(Z))^2 + e_{1\ell}^{\textit{lat}}(A_T(Z))^2 \right)\left( e_{0j}^{\textit{lat}}(A_T(Z))^2 + e_{1j}^{\textit{lat}}(A_T(Z))^2 \right) \right] \nonumber \\
    &= &\frac{1}{N} \EE_{\tilde{\PP}} \left[ \sum_{k=1}^{\mathcal{K}(N)} \left( e_{0k}^{\textit{lat}} (A_T(Z))^2 + e_{1k}^{\textit{lat}} (A_T(Z))^2\right)^2\right] \nonumber \\
    &= &\frac{1}{N} \sum_{k=1}^{\mathcal{K}(N)} \EE_{\tilde{\PP}} \Big[ \mathcal{K}(N)^2 \Ind_k^{\textit{lat}}(A_T(Z)) + 2 \mathcal{K}(N) \Ind_k^{\textit{lat}}(A_T(Z)) C_{1k}^2 \left( A_T(Z)-c_k \right)^2 \nonumber \\
    && {}+ C_{1k}^4 \Ind_k^{\textit{lat}}(A_T(Z)) (A_T(Z)-c_k)^4\Big] \nonumber \\
    &= &\frac{1}{N} \sum_{k=1}^{\mathcal{K}(N)}  \Big( \mathcal{K}(N) + 2 \mathcal{K}(N)+ C_{1k}^4  \EE_{\tilde{\PP}}\left[\Ind_k^{\textit{lat}}(A_T(Z)) (A_T(Z)-c_k)^4\right]\Big) \nonumber \\
    & \leq &\frac{3 \; \mathcal{K}(N)^2}{N} + \frac{\mathcal{K}(N)}{N} \max_k \left(\frac{\EE_{\tilde{\PP}}\left[\Ind_k^{\textit{lat}}(A_T(Z)) (A_T(Z)-c_k)^4\right]}{\left( \EE_{\tilde{\PP}}\left[\Ind_k^{\textit{lat}}(A_T(Z)) (A_T(Z)-c_k)^2\right] \right)^2 } \right).
\end{eqnarray}
Assumption \ref{as:pos dens pw} ensures that there is enough variation on each arbitrary interval such that the denominator in \eqref{eq:tildeh pw} is greater than zero. Moreover, it makes sure that the last term in the last line of Equation \eqref{eq:tildeh pw} does not grow faster than the first term in the last line of Equation \eqref{eq:tildeh pw}. Moreover,  the particular growth rate can be determined.

\begin{lemma}\label{lem:ratio moments pw}
If Assumption \ref{as:pos dens pw} is satisfied the following result holds
    \begin{equation*}\label{eq:ratio moments pw}
    \max_{1 \leq k \leq \mathcal{K}(N)}\frac{\EE_{\tilde{\PP}}\left[\Ind_k^{\textit{lat}}(A_T(Z)) (A_T(Z)-c_k)^4\right]}{\left( \EE_{\tilde{\PP}}\left[\Ind_k^{\textit{lat}}(A_T(Z)) (A_T(Z)-c_k)^2\right] \right)^2 } \leq O\left(\mathcal{K}(N)\right).
    \end{equation*}
\end{lemma}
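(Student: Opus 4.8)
The plan is to reduce the ratio to a bin-wise ``kurtosis-type'' quantity and then to exploit that Assumption~\ref{as:pos dens pw} forces the density of $A_T(Z)$ to be bounded above and below by strictly positive constants on the compact set $D$. Write $Y:=A_T(Z)$, let $\rho$ be its density, let $I_k:=[b_k,b_{k+1})$ and $h_k:=b_{k+1}-b_k$, and fix constants $0<m\le M<\infty$ with $m\le\rho(u)\le M$ for all $u\in D$ (such constants exist since a positive continuous function on a compact interval attains a positive minimum and a finite maximum). By construction $\Pr(Y\in I_k)=1/K$, and consequently the constant $c_k = K\,\EE_{\tilde{\PP}}[\Ind_k^{\textit{lat}}(Y)Y]$ equals the conditional mean $\EE_{\tilde{\PP}}[Y\mid Y\in I_k]$. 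I would first rewrite
$$\EE_{\tilde{\PP}}\!\big[\Ind_k^{\textit{lat}}(Y)(Y-c_k)^4\big]=\tfrac1K\,\EE_{\tilde{\PP}}\!\big[(Y-c_k)^4\mid Y\in I_k\big],\qquad \EE_{\tilde{\PP}}\!\big[\Ind_k^{\textit{lat}}(Y)(Y-c_k)^2\big]=\tfrac1K\,\EE_{\tilde{\PP}}\!\big[(Y-c_k)^2\mid Y\in I_k\big],$$
so that the quantity to be bounded equals $\mathcal K(N)$ times $\max_k\,\EE_{\tilde{\PP}}[(Y-c_k)^4\mid Y\in I_k]\big/\EE_{\tilde{\PP}}[(Y-c_k)^2\mid Y\in I_k]^2$. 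It therefore suffices to bound this ratio of conditional central moments by a constant uniform in $k$.

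For the numerator I would use that $c_k\in I_k$, hence $|Y-c_k|\le h_k$ on $\{Y\in I_k\}$, which gives $(Y-c_k)^4\le h_k^2(Y-c_k)^2$ there and therefore $\EE_{\tilde{\PP}}[(Y-c_k)^4\mid Y\in I_k]\le h_k^2\,\EE_{\tilde{\PP}}[(Y-c_k)^2\mid Y\in I_k]$. For the denominator I would bound the conditional variance from below: using the density bounds together with the elementary estimate $\int_{b_k}^{b_{k+1}}(u-c)^2\,\intd u\ge h_k^3/12$, valid for every $c\in[b_k,b_{k+1}]$ with the minimum attained at the midpoint,
$$\EE_{\tilde{\PP}}\!\big[(Y-c_k)^2\mid Y\in I_k\big]=\frac{\int_{I_k}(u-c_k)^2\rho(u)\,\intd u}{\int_{I_k}\rho(u)\,\intd u}\ \ge\ \frac{m\,(h_k^3/12)}{M\,h_k}=\frac{m}{12M}\,h_k^2.$$

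Combining the two bounds, the factor $h_k^2$ cancels and $\EE_{\tilde{\PP}}[(Y-c_k)^4\mid Y\in I_k]\big/\EE_{\tilde{\PP}}[(Y-c_k)^2\mid Y\in I_k]^2\le 12M/m$ for every $k$; multiplying by the factor $\mathcal K(N)$ extracted in the first step gives $\max_{1\le k\le\mathcal K(N)}(\cdots)\le(12M/m)\,\mathcal K(N)=O(\mathcal K(N))$. The one mildly delicate point is the lower bound on the conditional variance in the second step: one must take the worst position of $c_k$ inside its bin, which produces the constant $1/12$ (via minimising $t^3+(1-t)^3$ over $t\in[0,1]$), and one must invoke the strict positivity of the density to keep $m>0$. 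It is worth noting that no estimate on the bin width $h_k$ in terms of $\mathcal K(N)$ is required, since the $h_k^2$ factors cancel exactly; everything else is bookkeeping.
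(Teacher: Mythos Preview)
Your proof is correct and uses the same core ingredients as the paper: the strictly positive bounds $m\le\rho\le M$ on the density and the lower bound $\int_{I_k}(u-c)^2\,\intd u\ge h_k^3/12$ obtained by minimising over $c\in[b_k,b_{k+1}]$. The organisation is slightly cleaner than the paper's: by pulling out the factor $\mathcal K(N)$ via conditioning and then using the pointwise inequality $(Y-c_k)^4\le h_k^2(Y-c_k)^2$ on $I_k$, you get the $h_k^2$ factors to cancel directly, whereas the paper bounds the fourth-moment integral crudely by $M h_k^5$, obtains a ratio of order $1/h_k$, and then has to invoke the separate estimate $h_k\ge 1/(\mathcal K(N)M)$ to close the argument. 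The constants differ accordingly ($12M/m$ versus $144M^3/m^2$), but the method is the same.
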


\begin{proof}
See Appendix \ref{proofs}.
\end{proof}

\noindent Hence, by combining Lemma \ref{lem:ratio moments pw} and Equation (\ref{eq:tildeh pw})
\begin{align*}
    \tilde{h}^{\textit{lat}}(N,\mathcal{K}(N)) \leq O\left(\frac{\mathcal{K}(N)^2}{N} \right).
\end{align*}
A sufficient condition for $\tilde{h}^{\textit{lat}}(N,\mathcal{K}(N)) \rightarrow 0$ as $N \rightarrow \infty$ is that $\mathcal{K}(N) \propto N^a$, and with $a < 1 \slash 2$ Condition \eqref{th:Newey eq rate} in Theorem \ref{th:Newey_1} holds. Now Theorem \ref{th:Newey_1} is applicable and gives the convergence rate in mean-square
\begin{equation}\label{eq:conv pw result}
\EE_{\tilde{\PP}}\left[ \left(g_T(A_T(Z)) - \hat{g}_T^{\mathcal{K}(N)}(A_T(Z))\right)^2 \right] = O_{\tilde{\PP}}(\mathcal{K}(N)^{-4}).
\end{equation}
We immediately see that for $\mathcal{K}(N) \propto N^a$ and choosing $a$ only slightly smaller than $1 \slash 2$ we almost achieve a convergence rate of $N^{-2}$, which is considerably faster than the conventional Monte Carlo rate of $N^{-1}$. \\

We now look at the Regress-Later estimator with orthonormal piecewise linear functions as sieve for a specific underlying random variable.\\



\begin{figure}
\begin{center}
    \includegraphics[clip=true, width=0.6\textwidth ]{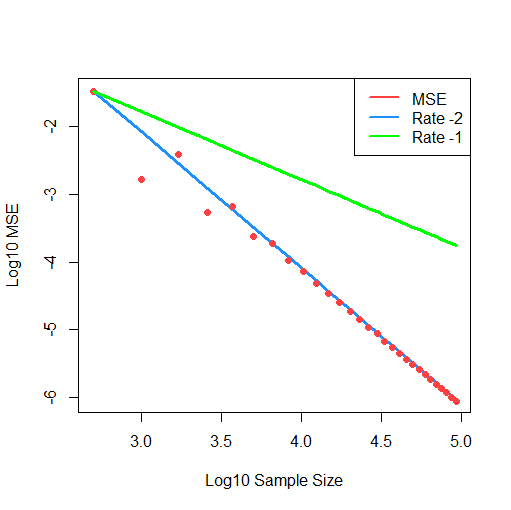} 
    \caption{Regress-Later convergence plot with $K$ up to $30$.}\label{fig:MSE tanh}
\end{center}
\end{figure}

\begin{figure}
\begin{center}
    \includegraphics[clip=true, width=0.6\textwidth ]{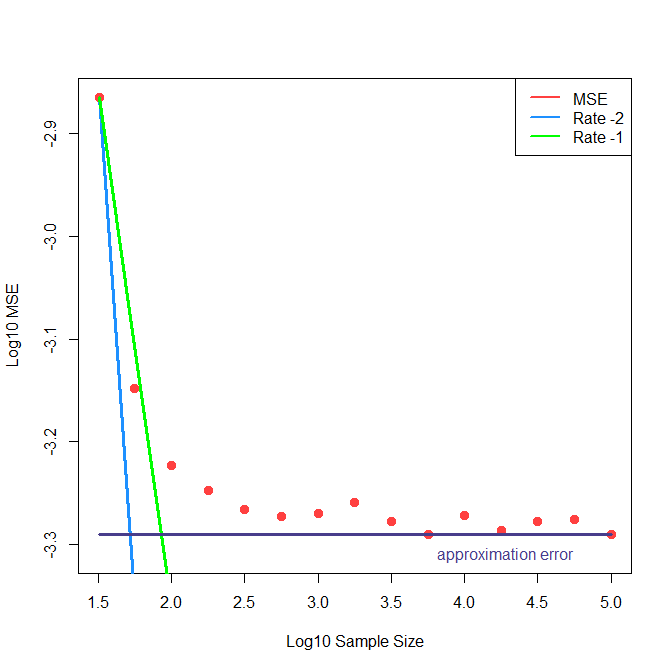} 
    \caption{Regress-Later convergence plot with $K=5$ fixed.}\label{fig:MSE tanh K fixed}
\end{center}
\end{figure}

\noindent{\it Brownian Motion} \\
We consider a Brownian Motion $W(T)$ as the underlying for a function $g_T(W(T))$ fulfilling the necessary conditions for Equation \eqref{eq:conv pw result}. We consider a compact domain $[a_1,a_2]$ for $W(T)$ and define the $K$ intervals $[b_k,b_{k+1}), k=1,\ldots,K$ such that each covers probability $1 \slash K$. The conditional normal density is then $\varphi(w|D) = \varphi(z) \slash \PP(W(T) \in D)$ with $\varphi(w) = \exp(-w^2 \slash (2T)) \slash \sqrt{2\pi\,T}$ the normal density. Assumption \ref{as:pos dens pw} is immediately satisfied. Thus, Lemma \ref{lem:ratio moments pw} applies. Then, again choosing $\mathcal{K}(N) \propto N^{a}$ with $a$ only slightly smaller than $1 \slash 2$ produces a mean square error that converges in probability almost at rate $N^{-2}$.

Figure \ref{fig:MSE tanh} gives the convergence for $X = \tanh(W(10))$ for $K$ up to $30$ and $N = 100 \; K^{2.01}$. We see that we can already achieve the fast convergence rate in finite samples. Figure \ref{fig:MSE tanh K fixed} gives the mean square error for the payoff function where the number of basis functions is fixed at $K=5$ and only the sample size grows up to $10^5$. The example illustrates that the mean square error does not converge further if only the sample size is increased. While the sampling error decreases with the growth of the sample size the approximation error only converges when the number of basis functions grows.

\section{Conclusion} \label{sec:Conclusion}
In this paper the discussion on Regress-Later estimators is picked up and addressed in comparison to Regress-Now estimators, which are currently more popular. Both estimators refer to LSMC solutions. Clarification is given on the functionality of Regress-Now and Regress-Later estimators based on several examples. Examples have been discussed that help to better understand the differences of Regress-Now and Regress-Later. The estimation approach for both estimators is outlined and the regression error for each is specified. It is shown that in Regress-Later the involved regression is nonstandard as the regression error corresponds to the approximation error, which vanishes in the limit. In contrast, the regression error of Regress-Now estimators contains an approximation and a projection error. While the approximation error vanishes in the limit the projection error is not eliminated. This leads to different convergence rates for Regress-Now and Regress-Later estimators. The current literature addresses convergence rates for Regress-Now estimators in a nonparametric setting. In this paper it is shown that the problem specification for Regress-Later is not nonparametric. This allows to relax the conditions typically necessary in nonparametric problems solved with sieve. Moreover, it is indicated that a nonparametric problem specification may also apply to Regress-Now estimators, which then similarly allows for weaker conditions. A specific basis is constructed based on piecewise linear functions and the Regress-Later convergence rate with this basis is derived explicitly. The result shows that Regress-Later estimators can be constructed such that they converge faster than the more often applied Regress-Now estimators.

\section*{Acknowledgements}
The authors are grateful for valuable input from participants at the $7$th World Congress of the Bachelier Finance Society, the Netspar Pension Day $2012$ and the AFMATH conference $2013$ in Brussels, in particular Dilip Madan, Ragnar Norberg, Hans Schumacher and Stefan Jaschke. Moreover, the authors would also like to thank Tobias Herwig, Deepak Pandey and Christian Br\"{u}nger for helpful discussions and comments.


\renewcommand*\appendixpagename{Appendix}
\renewcommand*\appendixtocname{Appendix} 

\appendix
\appendixpage

\section{Proofs} \label{proofs}
\begin{proof}[Proof of Lemma ~\ref{lemma conv matrix and smallest eigenvalue}]
We have
\begin{align*}
& \Big|\Big| \frac{1}{N}  \left(\bm{E}_{\mathcal{K}(N)}^{{\text{\textit{lat}}}}\right)^T\bm{E}_{\mathcal{K}(N)}^{{\text{\textit{lat}}}}-I_{\mathcal{K}(N)}\Big|\Big|_F^2 \\
& = \sum_{j=1}^{\mathcal{K}(N)} \sum_{\ell=1}^{\mathcal{K}(n)} \left(\frac{1}{N} \sum_{n=1}^N e_j^{{\text{\textit{lat}}}}(A_T(z_n))e_{\ell}^{{\text{\textit{lat}}}}(A_T(z_n))-\EE_{\tilde{\PP}}\left[e_j^{{\text{\textit{lat}}}}(A_T(z_n))e_{\ell}^{{\text{\textit{lat}}}}(A_T(z_n))\right]\right)^2\\
\end{align*}
Therefore,
\begin{eqnarray*}
 \EE_{\tilde{\PP}}\left[\Big|\Big| \frac{1}{N}  \left(\bm{E}_{\mathcal{K}(N)}^{{\text{\textit{lat}}}}\right)^T\bm{E}_{\mathcal{K}(N)}^{{\text{\textit{lat}}}}-I_{\mathcal{K}(N)}\Big|\Big|_F^2\right]
& = &\frac{1}{N} \sum_{j=1}^{\mathcal{K}(N)} \sum_{\ell=1}^{\mathcal{K}(N)}   \vari_{\tilde{\PP}}\left[e_j^{{\text{\textit{lat}}}}(A_T(Z))e_{\ell}^{{\text{\textit{lat}}}}(A_T(Z))\right] \nonumber \\
& \leq &\frac{1}{N} \sum_{j=1}^{\mathcal{K}(N)} \sum_{\ell=1}^{\mathcal{K}(N)} \EE_{\tilde{\PP}}\left[\left(e_j^{{\text{\textit{lat}}}}(A_T(Z))e_{\ell}^{{\text{\textit{lat}}}}(A_T(Z))\right)^2\right] \nonumber \\
& = & o(1).
\end{eqnarray*}
Now, (\ref{eq matrix conv}) follows by Markov's inequality.
Since $I_{\mathcal{K}(N)}$ is the identity matrix we have:
$$\lambda_{\min}\left(\frac{1}{N}\left(\bm{E}_{\mathcal{K}(N)}^{{\text{\textit{lat}}}}\right)^T\bm{E}_{\mathcal{K}(N)}^{{\text{\textit{lat}}}}\right)-1=
\lambda_{\min}\left(\frac{1}{N} \left(\bm{E}_{\mathcal{K}(N)}^{{\text{\textit{lat}}}}\right)^T\bm{E}_{\mathcal{K}(N)}^{{\text{\textit{lat}}}}-I_{\mathcal{K}(N)}\right).
$$
The result now follows from the fact that the smallest eigenvalue of a matrix is bounded above by its Frobenius norm and that therefore (\ref{eq matrix conv}) implies (\ref{eq eigenvalue conv}).
\end{proof}


\begin{proof}[Proof of Lemma ~\ref{lemma conv quadratic form parameter}]
By the standard representation of the empirical error $\left( {\hat{\boldsymbol{\alpha}}}_K^{\text{\textit{lat}}}- \boldsymbol{\alpha}_K^{\text{\textit{lat}}} \right)$ for least squares estimators it follows that
\begin{eqnarray*}
\left( {\hat{\boldsymbol{\alpha}}}_K^{\text{\textit{lat}}}- \boldsymbol{\alpha}_K^{\text{\textit{lat}}} \right) = \left(\left(\bm{E}_K^{{\text{\textit{lat}}}}\right)^T\bm{E}_K^{{\text{\textit{lat}}}}\right)^{-1} \left(\bm{E}_K^{{\text{\textit{lat}}}}\right)^T \bm{a}_T^K.
\end{eqnarray*}
Putting $\bm{B}_{\mathcal{K}(N)}= \left((1 \slash N)\left(\bm{E}_{\mathcal{K}(N)}^{{\text{\textit{lat}}}}\right)^T\bm{E}_{\mathcal{K}(N)}^{{\text{\textit{lat}}}}\right)$ we have by the above representation for the empirical error
\begin{align*}
  \hat{\boldsymbol{\alpha}}_{\mathcal{K}(N)}^{\text{\textit{lat}}}-  \boldsymbol{\alpha}_{\mathcal{K}(N)}^{\text{\textit{lat}}} = \bm{B}_{\mathcal{K}(N)}^{-1} \frac{1}{N} \left(\bm{E}_{\mathcal{K}(N)}^{{\text{\textit{lat}}}}\right)^T\bm{a}_T^{\mathcal{K}(N)}.
  \end{align*}
Then
\begin{align}\label{eq proof conv quadratic form}
&\left( \hat{\boldsymbol{\alpha}}_{\mathcal{K}(N)}^{\text{\textit{lat}}}-  \boldsymbol{\alpha}_{\mathcal{K}(N)}^{\text{\textit{lat}}}\right)^T
\left( \hat{\boldsymbol{\alpha}}_{\mathcal{K}(N)}^{\text{\textit{lat}}}-  \boldsymbol{\alpha}_{\mathcal{K}(N)}^{\text{\textit{lat}}}\right) \nonumber \\
& = \frac{1}{N^2} \left(\bm{a}_T^{\mathcal{K}(N)}\right)^T \bm{E}_{\mathcal{K}(N)}^{{\text{\textit{lat}}}} \bm{B}_{\mathcal{K}(N)}^{-1} \bm{B}_{\mathcal{K}(N)}^{-1}
\left(\bm{E}_{\mathcal{K}(N)}^{{\text{\textit{lat}}}}\right)^T\bm{a}_T^{\mathcal{K}(n)} \nonumber \\
& \leq \frac{1}{N^2} \left(\lambda_{\max}\left(\bm{B}_{\mathcal{K}(N)}^{-1}\right)\right)^2 \left(\bm{a}_T^{\mathcal{K}(N)}\right)^T \bm{E}_{\mathcal{K}(N)}^{{\text{\textit{lat}}}}
\left(\bm{E}_{\mathcal{K}(N)}^{{\text{\textit{lat}}}}\right)^T\bm{a}_T^{\mathcal{K}(N)},
\end{align}
where $\lambda_{\max}(\bm{A})$ denotes the largest eigenvalue of a matrix $\bm{A}$.
Notice that by Assumption \ref{as:Newey_1}
\begin{align}\label{eq lemma conv quadratic form cauchy}
&\frac{1}{N^2} \, \EE_{\tilde{\PP}} \left[\left(\bm{a}_T^{\mathcal{K}(N)}\right)^T  \bm{E}_{\mathcal{K}(N)}^{{\text{\textit{lat}}}} \left(\bm{E}_{\mathcal{K}(N)}^{{\text{\textit{lat}}}}\right)^T\bm{a}_T^{\mathcal{K}(N)}\right] \nonumber \\
 & = \frac{1}{N}\, \EE_{\tilde{\PP}}\left[\left(a_T^{\mathcal{K}(N) } \left( A_T(Z) \right) \right)^2 \left(\bm{e}_{\mathcal{K}(N)}^{{\text{\textit{lat}}}}\right)^T \bm{e}_{\mathcal{K}(N)}^{{\text{\textit{lat}}}}\right] \nonumber \\
& \leq  \frac{1}{N}\, \sqrt{\EE_{\tilde{\PP}}\left[\left(a_T^{\mathcal{K}(N)} \left( A_T(Z) \right)\right)^4\right]} \sqrt{\EE_{\tilde{\PP}}\left[\left(\left(\bm{e}_{\mathcal{K}(N)}^{{\text{\textit{lat}}}}\right)^T \bm{e}_{\mathcal{K}(N)}^{{\text{\textit{lat}}}}\right)^2\right]}, \nonumber \\
\end{align}
where we used the Cauchy-Schwarz inequality. Using Assumption \ref{as:Newey_3 modified} and Condition (\ref{th:Newey eq rate}) we get from (\ref{eq lemma conv quadratic form cauchy})
 $$\EE_{\tilde{\PP}}\left[\left(\bm{a}_T^{\mathcal{K}(N)}\right)^T \bm{E}_{\mathcal{K}(N)}^{{\text{\textit{lat}}}}
\left(\bm{E}_{\mathcal{K}(N)}^{{\text{\textit{lat}}}}\right)^T\bm{a}_T^{\mathcal{K}(N)}\right]=o\left(\mathcal{K}(N)^{-\gamma_{{\text{\textit{lat}}}}}\right).$$
By Markov's inequality it follows
$$ \frac{1}{N^2}\left(\bm{a}_T^{\mathcal{K}(N)}\right)^T \bm{E}_{\mathcal{K}(N)}^{{\text{\textit{lat}}}}
\left(\bm{E}_{\mathcal{K}(N)}^{{\text{\textit{lat}}}}\right)^T\bm{a}_T^{\mathcal{K}(N)}=o_{\tilde{\PP}}\left(\mathcal{K}(N)^{-\gamma_{{\text{\textit{lat}}}}}\right).$$
Since $\lambda_{\max}\left(\bm{B}_{\mathcal{K}(N)}^{-1}\right)=\left(\lambda_{\min}\left(\bm{B}_{\mathcal{K}(N)}\right)\right)^{-1}$ Equation (\ref{eq eigenvalue conv}) implies that $\lambda_{\max}\left(\bm{B}_{\mathcal{K}(N)}^{-1}\right)=O_{\tilde{\PP}}(1)$. Putting everything together we get that (\ref{eq proof conv quadratic form}) is indeed $o_{\tilde{\PP}}\left(\mathcal{K}(N)^{-\gamma_{{\text{\textit{lat}}}}}\right)$.
\end{proof}


\begin{proof}[Proof of Lemma ~\ref{lemma conv quadratic form parameter regress now}]
By the standard representation of the empirical error $\left( {\hat{\boldsymbol{\alpha}}}_K^{\text{\textit{now}}}- \boldsymbol{\alpha}_K^{\text{\textit{now}}} \right)$ for least squares estimators it follows that
\begin{eqnarray*}
\left( {\hat{\boldsymbol{\alpha}}}_K^{\text{\textit{now}}}- \boldsymbol{\alpha}_K^{\text{\textit{now}}} \right) = \left(\left(\bm{E}_K^{{\text{\textit{now}}}}\right)^T\bm{E}_K^{{\text{\textit{now}}}}\right)^{-1} \left(\bm{E}_K^{{\text{\textit{now}}}}\right)^T (\bm{a}_{0,t}^K+\bm{p}_{0,t}).
\end{eqnarray*}
Putting $\bm{B}_{\mathcal{K}(N)}= \left((1 \slash
N)\left(\bm{E}_{\mathcal{K}(N)}^{{\text{\textit{now}}}}\right)^T\bm{E}_{\mathcal{K}(N)}^{{\text{\textit{now}}}}\right)$
we have again by the above representation for the empirical error
\begin{align*}
  \hat{\boldsymbol{\alpha}}_{\mathcal{K}(N)}^{\text{\textit{now}}}-  \boldsymbol{\alpha}_{\mathcal{K}(N)}^{\text{\textit{now}}}
  = \bm{B}_{\mathcal{K}(N)}^{-1} \frac{1}{N} \left(\bm{E}_{\mathcal{K}(N)}^{{\text{\textit{now}}}}\right)^T\left(\bm{p}_{0,t}+\bm{a}_{0,t}^{\mathcal{K}(N)}\right).
  \end{align*}
Then
\begin{align*}
&\left( \hat{\boldsymbol{\alpha}}_{\mathcal{K}(N)}^{\text{\textit{now}}}-
\boldsymbol{\alpha}_{\mathcal{K}(N)}^{\text{\textit{now}}} \right)^T
\left( \hat{\boldsymbol{\alpha}}_{\mathcal{K}(N)}^{\text{\textit{now}}}-  \boldsymbol{\alpha}_{\mathcal{K}(N)}^{\text{\textit{now}}}\right) \nonumber \\
& = \frac{1}{N^2}
\left(\bm{a}_{0,t}^{\mathcal{K}(N)}+\bm{p}_{0,t}\right)^T \bm{E}_{\mathcal{K}(N)}^{{\text{\textit{now}}}} \bm{B}_{\mathcal{K}(N)}^{-1}
 \bm{B}_{\mathcal{K}(N)}^{-1}
\left( \bm{E}_{\mathcal{K}(N)}^{{\text{\textit{now}}}} \right)^T\left(\bm{a}_{0,t}^{\mathcal{K}(N)}+\bm{p}_{0,t}\right) \nonumber \\
& \leq \frac{1}{N^2}
\left(\lambda_{max}\left(\bm{B}_{\mathcal{K}(N)}^{-1}\right)\right)^2
\left(\bm{a}_{0,t}^{\mathcal{K}(N)}+\bm{p}_{0,t}\right)^T\bm{E}_{\mathcal{K}(N)}^{{\text{\textit{now}}}}
(\bm{E}_{\mathcal{K}(N)}^{{\text{\textit{now}}}})^T\left(\bm{a}_{0,t}^{\mathcal{K}(N)}+\bm{p}_{0,t}\right),
\end{align*}
where $\lambda_{max}(\bm{A})$ denotes the largest eigenvalue of a matrix $\bm{A}$.
By Assumption \ref{as:Newey_1 regress-now} we have
\begin{align}\label{eq lemma regress-now a + p}
&\frac{1}{N^2}\, \EE_{\tilde{\PP}} \left[ \left(\bm{a}_{0,t}^{\mathcal{K}(N)}+\bm{p}_{0,t}\right)^T\bm{E}_{\mathcal{K}(N)}^{{\text{\textit{now}}}}
(\bm{E}_{\mathcal{K}(N)}^{{\text{\textit{now}}}})^T\left(\bm{a}_{0,t}^{\mathcal{K}(N)}+\bm{p}_{0,t}\right)\right] \nonumber \\
&=\frac{1}{N} \, \EE_{\tilde{\PP}}\left[\left(a_{0,t}^{\mathcal{K}(N)}(A_t(Z))+p_{0,t} (A_T(Z))\right)^2 \left(\bm{e}_{\mathcal{K}(N)}^{{\text{\textit{now}}}}\right)^T \bm{e}_{\mathcal{K}(N)}^{{\text{\textit{now}}}}\right].
 \end{align}
 Now, notice that
 \begin{eqnarray*}
 \EE_{\tilde{\PP}}\left[\left(p_{0,t}(A_T(Z)) \right)^2 \left(\bm{e}_{\mathcal{K}(N)}^{{\text{\textit{now}}}}\right)^T \bm{e}_{\mathcal{K}(N)}^{{\text{\textit{now}}}}\right] & = &
 \EE_{\tilde{\PP}}\left[\EE_{\tilde{\PP}}\left[\left(p_{0,t}(A_T(Z)) \right)^2 \left(\bm{e}_{\mathcal{K}(N)}^{{\text{\textit{now}}}}\right)^T \bm{e}_{\mathcal{K}(N)}^{{\text{\textit{now}}}}\big | A_t(Z)\right]\right]\\
& = & \EE_{\tilde{\PP}}\left[\EE_{\tilde{\PP}}\left[\left(p_{0,t}(A_T(Z)) \right)^2 \big | A_t(Z)\right]\left(\bm{e}_{\mathcal{K}(N)}^{{\text{\textit{now}}}}\right)^T \bm{e}_{\mathcal{K}(N)}^{{\text{\textit{now}}}}\right] \\
& = &\sigma^2 \mathcal{K}(N).
\end{eqnarray*}
 Moreover, since $\EE_{\tilde{\PP}}[p_{0,t}(A_T(Z))|A_t(Z)]=0$ and since $a_{0,t}^{\mathcal{K}(N)}= \sum_{\ell=\mathcal{K}(N)+1}^{\infty} \alpha_{\ell}^{\text{\textit{now}}}
e_{\ell}^{\text{\textit{now}}}(A_t(Z))$ implying that $\EE_{\tilde{\PP}}[a_{0,t}^{\mathcal{K}(N)}(A_t(Z))|A_t(Z)]= a_{0,t}^{\mathcal{K}(N)}(A_t(Z))$,  we obtain
\begin{align*}
&\EE_{\tilde{\PP}} \left[a_{0,t}^{\mathcal{K}(N)}(A_t(Z)) p_{0,t}(A_T(Z)) \left(\bm{e}_{\mathcal{K}(N)}^{{\text{\textit{now}}}}\right)^T \bm{e}_{\mathcal{K}(N)}^{{\text{\textit{now}}}}\right] \\
&=  \EE_{\tilde{\PP}}\left[\EE_{\tilde{\PP}}\left[p_{0,t}(A_T(Z))|A_t(Z)\right]a_{0,t}^{\mathcal{K}(N)}(A_t(Z)) \left(\bm{e}_{\mathcal{K}(N)}^{{\text{\textit{now}}}}\right)^T \bm{e}_{\mathcal{K}(N)}^{{\text{\textit{now}}}}\right] \\
& =0.
\end{align*}
Hence,
\begin{eqnarray*}\label{eq lemma conv quadratic form cauchy regress now}
(\ref{eq lemma regress-now a + p}) & = & \frac{1}{N}\, \EE_{\tilde{\PP}}\left[\left(a_{0,t}^{\mathcal{K}(N)}(A_t(Z))\right)^2 \left(\bm{e}_{\mathcal{K}(N)}^{{\text{\textit{now}}}}\right)^T \bm{e}_{\mathcal{K}(N)}^{{\text{\textit{now}}}}\right] + \frac{\sigma^2 \mathcal{K}(N)}{N} \nonumber \\
& \leq & \frac{1}{N}\, \sqrt{\EE_{\tilde{\PP}}\left[\left(a_{0,t}^{\mathcal{K}(N)}(A_t(Z)) \right)^4\right]}\sqrt{\EE_{\tilde{\PP}}\left[\left(\left(\bm{e}_{\mathcal{K}(N)}^{{\text{\textit{now}}}}\right)^T \bm{e}_{\mathcal{K}(N)}^{{\text{\textit{now}}}}\right)^2\right]} + \frac{\sigma^2 \mathcal{K}(N)}{N}
\end{eqnarray*}
where we used the Cauchy-Schwarz inequality. Using Assumption
\ref{as:Newey_3 modified regress-now} and (\ref{th:Newey eq rate regress now}) we have that (\ref{eq lemma regress-now a + p}) is $o\left(\mathcal{K}(N)^{-\gamma_{\text{\textit{now}}}}\right)+ O(\mathcal{K}(N) \slash N)$. The remaining steps are now as in the proof of Theorem \ref{th:Newey_1}.
\end{proof}

\begin{proof}[Proof of Lemma ~\ref{lem:approx_error_piecewise}]
Let $f$ be the density on $D$. Then $m:=\min_{u \in D} f(u) >0$ and $M:=\max_{u \in D} f(u) < \infty$. We approximate the coefficients $\alpha_{0k}$ and $\alpha_{1k}$ by $g_T(c_k) \slash \sqrt{K}$ and  $g_T'(c_k) \slash C_{1k}$, respectively. By a first order Taylor expansion around $c_k$ with Lagrange's form of the remainder term, i.e.
$$g_T(u) = g_T(c_k) + g_T'(c_k)(u-c_k) + \frac{1}{2} g_T''(\xi) (u-c_k)^2, \; \xi \in [u,c_k],$$
we obtain
\begin{eqnarray*}
\sqrt{K} \alpha_{0k} &=& K \; \EE_{\tilde{\PP}}\left[ g_T(A_T(Z)) \Ind_k(A_T(Z)) \right]  \\
                     &=& K \int_{b_k}^{b_{k+1}} \left( g_T(c_k) + g_T'(c_k)(u-c_k) + \frac{1}{2}g_T''(\xi)(u-c_k)^2 \right) f(u) \, \intd u \\
                     &=& g_T(c_k) + \frac{K}{2} \int_{b_k}^{b_{k+1}} g_T''(\xi) (u-c_k)^2 f(u) \, \intd u,
\end{eqnarray*}
and

\begin{eqnarray*}
C_{1k} \,\alpha_{1k} &=& C_{1k}^2 \, \EE_{\tilde{\PP}} \left[ g_T(A_T(Z)) (A_T(Z)-c_k) \Ind_k(A_T(Z))\right] \\
                   &=& C_{1k}^2 \int_{b_k}^{b_{k+1}} \left( g_T(c_k) + g_T'(c_k)(u-c_k) + \frac{1}{2} g_T''(\xi) (u-c_k)^2\right) (u-c_k) f(u)\, \intd u \\
                   &=& g_T'(c_k) + \frac{C_{1k}^2}{2} \int_{b_k}^{b_{k+1}} g_T''(\xi) (u-c_k)^3 f(u)\, \intd u.
\end{eqnarray*}
The following bounds will be helpful in the remainder of the proof
\begin{eqnarray}\label{first bound proof lemma 4.1}
 \frac{1}{K M}\leq (b_{k+1}-b_k) \leq \frac{1}{K m}.
 \end{eqnarray}
They follow from the fact that by definition $ 1 \slash K = \int_{b_k}^{b_{k+1}} f(u) \intd u$ and the trivial inequalities
$ m (b_{k+1}-b_k) \leq \int_{b_k}^{b_{k+1}} f(u) \intd u \leq M (b_{k+1}-b_k)$.
Moreover
\begin{eqnarray}\label{second bound proof of lemma 4.1}
\max_{1 \leq k \leq K} C_{1k}^2 & = & \frac{1}{\min_{1 \leq k \leq K}\EE_{\tilde{\PP}}[\Ind_k(A_T(Z)) (A_T(Z)-c_k)^2]} \nonumber \\
 & \leq & \frac{12}{m(b_{k+1}-b_k)^3} \nonumber \\
 & \leq & \frac{12 (KM)^3}{m},
\end{eqnarray}
where the second inequality follows from (\ref{first bound proof lemma 4.1}) and the first inequality from the fact
\begin{eqnarray*}
\EE_{\tilde{\PP}}[\Ind_k(A_T(Z)) (A_T(Z)-c_k)^2] & = & \int_{b_k}^{b_{k+1}} (u-c_k)^2 f(u)\, \intd u \geq m \int_{b_k}^{b_{k+1}} (u-c_k)^2\,\intd u\\
&  = & \frac{m}{3}[ (b_{k+1}-c_k)^3-(b_k-c_k)^3]\\
& \geq & \frac{m}{12} (b_{k+1}-b_k)^3,
\end{eqnarray*}
because $(b_{k+1}-c_k)^3-(b_k-c_k)^3$ as a function of $c_k$ is minimized at $c_k=(b_{k+1}+b_k) \slash 2$.\\
For the fourth moment of the approximation error we now obtain with $B:=\sup_{u \in D} \abs{g_T''(u)}$
{\allowdisplaybreaks
\begin{align*}
& \EE_{\tilde{\PP}} \Big[ \big( g_T(A_T(Z))-g_T^K(A_T(Z))\big)^4 \Big]
= \EE_{\tilde{\PP}}\left[ \left( g_T(A_T(Z))- \sum_{k=1}^K \left( \alpha_{0k} e_{0k}(A_T(Z)) + \alpha_{1k} e_{1k}(A_T(Z))\right)\right)^4\right] \\
&= \EE_{\tilde{\PP}}\left[ \left( \sum_{k=1}^K \left( g_T(A_T(Z))-\alpha_{0k} \sqrt{K} - \alpha_{1k} C_{1k} (A_T(Z)-c_k) \right) \Ind_k(A_T(Z)) \right)^4\right] \\
&= \EE_{\tilde{\PP}}\left[ \sum_{k=1}^K \left( g_T(A_T(Z))-\alpha_{0k} \sqrt{K} - \alpha_{1k} C_{1k} (A_T(Z)-c_k) \right)^4 \Ind_k(A_T(Z)) \right] \\
&= \sum_{k=1}^K \int_{b_k}^{b_{k+1}} \Bigg( g_T(u) - g_T(c_k) - g_T'(c_k) (u-c_k) - \frac{K}{2} \int_{b_k}^{b_{k+1}} g_T''(\xi) (v-c_k)^2 f(v) \, \intd v \\
 & \qquad \qquad \qquad - \frac{C_{1k}^2}{2} \left(\int_{b_k}^{b_{k+1}} g_T''(\xi) (v-c_k)^3 f(v)\, \intd v \right) (u-c_k) \Bigg)^4 f(u)\, \intd u \\
&\leq 27 \, \sum_{k=1}^K \Bigg(\int_{b_k}^{b_{k+1}} \left(  \frac{1}{2} g_T''(\xi) (u-c_k)^2 \right)^4 f(u) \, \intd u + \int_{b_k}^{b_{k+1}} \left(\frac{K}{2} \int_{b_k}^{b_{k+1}} g_T''(\xi) (v-c_k)^2 f(v) \, \intd v\right)^4 f(u) \, \intd u \\
& \qquad \qquad \quad + \int_{b_k}^{b_{k+1}} \left( \frac{C_{1k}^2}{2} \left(\int_{b_k}^{b_{k+1}} g_T''(\xi) (v-c_k)^3 f(v)\, \intd v \right) (u-c_k) \right)^4 f(u) \, \intd u \Bigg) \\
&= 27 \sum_{k=1}^K \Bigg( \frac{1}{16} \int_{b_k}^{b_{k+1}} g_T''(\xi)^4 (u-c_k)^8 f(u)\,\intd u + \frac{K^4}{16} \int_{b_k}^{b_{k+1}} \left( \int_{b_k}^{b_{k+1}} g_T''(\xi) (v-c_k)^2 f(v) \, \intd v\right)^4 f(u) \, \intd u \\
& \qquad \qquad + \frac{C_{1k}^8}{16} \int_{b_k}^{b_{k_1}} \left( \int_{b_k}^{b_{k+1}} g_T''(\xi) (v-c_k)^3 f(v) \, \intd v \right)^4 (u-c_k)^4 f(u) \, \intd u \Bigg) \\
&\leq \frac{27}{16}\, K\, \Bigg(\frac{1}{K} B^4 \max_{1 \leq k \leq K} (b_{k+1}-b_k)^8 + K^4  \frac{1}{K^5}B^4 \left( \max_{1 \leq k \leq K} (b_{k+1}-b_k)^2 \right)^4 \\
& \qquad \qquad +\frac{1}{K^5} \max_{1 \leq k \leq K} \left(C_{1k}^8\right) B^4 \left( \max_{1 \leq k \leq K} (b_{k+1}-b_k)^3\right)^4 \max_{1 \leq k \leq K} (b_{k+1}-b_k)^4\Bigg) \\
&\leq \frac{54}{16} \cdot B^4 \cdot \frac{1}{m^8K^8}  + \frac{27}{16} \frac{1}{K^4} \frac{12^4 M^{12}K^{12}}{m^4} \cdot B^4 \cdot \frac{1}{K^{12} m^{12}} \frac{1}{K^4 m^4} \\
&=O \left( \frac{1}{K^8} \right),
\end{align*}
}
where we used that $g_T(c_k)+g_T'(c_k)(u-c_k)$ corresponds to the first order Taylor expansion of $g_T(u)$ around $c_k$. The first inequality follows from Lo\`{e}ve's $\text{c}_r$-inequality and the third makes use of (\ref{first bound proof lemma 4.1}) and (\ref{second bound proof of lemma 4.1}). Lemma \ref{lem:approx_error_piecewise} follows immediately.

\end{proof}


\begin{proof}[Proof of Lemma ~\ref{lem:ratio moments pw}]
Let $m$ and $M$ as in the proof of Lemma \ref{lem:approx_error_piecewise}. Let $b_k$ and $b_{k+1}$ be in $[a_1,a_2]$ with $b_k<b_{k+1}$ and let $c_k \in [b_k,b_{k+1}]$.
Then
$$\int_{b_k}^{b_{k+1}} (u-c_k)^4 f(u)\,\intd u \leq M \int_{b_k}^{b_{k+1}} (u-c_k)^4\,\intd u \leq M (b_{k+1}-b_k)^5.$$
Moreover, from the proof of Lemma \ref{lem:approx_error_piecewise} we know
$$\int_{b_k}^{b_{k+1}} (u-c_k)^2 f(u)\,\intd u \geq \frac{m}{12} (b_{k+1}-b_k)^3.$$
Therefore
\begin{eqnarray*}
\frac{\int_{b_k}^{b_{k+1}} (u-c_k)^4 f(u)\,\intd u}{\left(\int_{b_k}^{b_{k+1}} (u-c_k)^2 f(u)\,\intd u \right)^2} & \leq & \frac{M(b_{k+1}-b_k)^5}{\left(\frac{m}{12}\right)^2 (b_{k+1}-b_k)^6}\\
& = & C \; \frac{1}{(b_{k+1}-b_k)},
\end{eqnarray*}
where $C:=M \slash (m \slash 12)^2$.
Using the left hand inequality in (\ref{first bound proof lemma 4.1}) we get
\begin{eqnarray*}
\frac{\int_{b_k}^{b_{k+1}} (u-c_k)^4 f(u)\,\intd u}{\left(\int_{b_k}^{b_{k+1}} (u-c_k)^2 f(u)\,\intd u \right)^2} \leq C \cdot M \cdot K.
\end{eqnarray*}

\end{proof}


\newpage
\bibliographystyle{apalike}
\bibliography{Bib_Article1}

\end{document}